\documentclass{article}

\usepackage{microtype}
\usepackage{graphicx}
\usepackage{subcaption}
\usepackage{booktabs} 
\usepackage{tikz}
\usepackage{listings}
\usepackage{xcolor}

\usetikzlibrary{shapes.geometric, arrows.meta, positioning, calc, backgrounds, fit}

\usepackage{enumitem}
\usepackage{hyperref}

\usepackage[preprint]{icml2026}

\usepackage{amsmath}
\usepackage{amssymb}
\usepackage{mathtools}
\usepackage{amsthm}

\usepackage[capitalize,noabbrev]{cleveref}

\theoremstyle{plain}
\newtheorem{theorem}{Theorem}[section]
\newtheorem{proposition}[theorem]{Proposition}

\newtheorem{condition}[theorem]{Condition}
\theoremstyle{definition}
\newtheorem{definition}[theorem]{Definition}

\theoremstyle{remark}
\newtheorem{remark}[theorem]{Remark}

\usepackage[textsize=tiny]{todonotes}

\crefname{condition}{Condition}{Conditions}
\crefname{proposition}{Proposition}{Propositions}
\crefname{lemma}{Lemma}{Lemmas}
\crefname{theorem}{Theorem}{Theorems}

\icmltitlerunning{Submission and Formatting Instructions for ICML 2026}

\begin{document}

\twocolumn[
    \icmltitle{GRASP: Gradient Realignment via Active Shared Perception for Multi-Agent Collaborative Optimization}



  \icmlsetsymbol{equal}{*}

  \begin{icmlauthorlist}
    \icmlauthor{Sihan Zhou}{dlut}
    \icmlauthor{Tiantian He}{astar}
    \icmlauthor{Yifan Lu}{dlut}
    \icmlauthor{Yaqing Hou}{dlut}
    \icmlauthor{Yew-Soon Ong}{astar,ntu}
\end{icmlauthorlist}

\icmlaffiliation{dlut}{College of Computer Science and Technology, and Key Laboratory of Social Computing and Cognitive Intelligence (Ministry of Education), Dalian University of Technology, China}
\icmlaffiliation{astar}{Center for Frontier AI Research, Institute of High Performance Computing, Singapore Institute of Manufacturing Technology, Agency for Science, Technology and Research (A*STAR), Singapore}
\icmlaffiliation{ntu}{School of Computer Science and Engineering, Nanyang Technological University, Singapore}

\icmlcorrespondingauthor{Yaqing Hou}{houyq@dlut.edu.cn}

  \icmlkeywords{Multi-Agent Reinforcement Learning, Convergence Analysis, Non-Stationarity, Fixed-Point Theory, Game Theory, Policy Gradient Methods.}

  \vskip 0.3in
]



\printAffiliationsAndNotice{}  

\begin{abstract}
Non-stationarity arises from concurrent policy updates and leads to persistent environmental fluctuations. Existing approaches like Centralized Training with Decentralized Execution (CTDE) and sequential update schemes mitigate this issue. However, since the perception of the policies of other agents remains dependent on sampling environmental interaction data, the agent essentially operates in a passive perception state. This inevitably triggers equilibrium oscillations and significantly slows the convergence speed of the system. To address this issue, we propose Gradient Realignment via Active Shared Perception (GRASP), a novel framework that defines generalized Bellman equilibrium as a stable objective for policy evolution. The core mechanism of GRASP involves utilizing the independent gradients of agents to derive a defined consensus gradient, enabling agents to actively perceive policy updates and optimize team collaboration. Theoretically, we leverage the Kakutani Fixed-Point Theorem to prove that the consensus direction $u^*$ guarantees the existence and attainability of this equilibrium. Extensive experiments on StarCraft II Multi-Agent Challenge (SMAC) and Google Research Football (GRF) demonstrate the scalability and promising performance of the framework.
\end{abstract}
\section{Introduction}
Multi-Agent Reinforcement Learning (MARL)~\cite{icmlmarl1} has emerged as a powerful paradigm for addressing complex sequential decision-making problems involving multiple autonomous agents~\cite{zhou2025cooperative}. However, the transition from single-agent to multi-agent settings introduces a fundamental challenge, non-stationarity~\cite{nsqiu}. During the learning and interaction process, each agent perceives other agents as part of its environment. As all agents are simultaneously updating their policies, the environment from the perspective of any single agent becomes non-stationary~\cite{ns1}. This leads to a persistent policy interplay where policies are in constant flux, severely complicating the convergence of learning algorithms.

To mitigate this issue, the Centralized Training with Decentralized Execution (CTDE)~\cite{ctde} framework was proposed. By leveraging a centralized critic with access to global information, CTDE can provide more stable value estimates~\cite{qmix}. Despite employing a centralized critic to stabilize the training environment of the critic and guide the updates of the actor, the actor remains blind during policy updates. Decentralized actors simultaneously update their policies without explicit coordination, causing the equilibrium point and ultimate optimization goal of the entire multi-agent system to constantly shift~\cite{dynamicpoint}. This indicates that the non-stationarity issue in policy learning dynamics persists.

Another researches address non-stationarity through serialization mechanisms, categorizable into sequential updates and sequential execution. Methods like HAPPO~\cite{happo} employ sequential updates, discretizing the training process into time slices where only a single agent updates its policy at a time. While this creates a temporarily stationary environment locally~\cite{hasac,order}, the resulting stationarity is fragmented and discontinuous. During update intervals, the policy distributions of the remaining agents continue to evolve, leading to overall policy updates that remain synchronized across multiple overlapping time slices. This prevents smooth convergence of the joint policy~\cite{order2}. On the other hand, sequential decision methods~\cite{ace} construct weakly stationary environments by propagating the actions of preceding agents within a single time step. However, the actions passed represent only a stochastic point estimation of the preceding policy, making it difficult for subsequent agents to capture the full distribution of upstream policies and introducing high variance.

While these methods introduce a degree of coordination, they rely on what we term passive perception of policy updates~\cite{pass}. An agent is unaware of how other agents update their policies. It can only infer changes in policies of other agents by collecting new interaction data from the environment. This inference process is slow and inefficient. Particularly in the early stages of training, when policies change drastically, an agent may still be adapting to a previous policy of another agent that has already changed again~\cite{reddi2025recursive}. This means that there is no clear objective for optimizing agent policies in a Multi-Agent System (MAS), which can only guarantee that an equilibrium point always exists, but this point is not explicitly defined. 

To address these challenges and mitigate the impact of non-stationarity in multi-agent collaboration, we propose the Gradient Realignment via Active Shared Perception (GRASP) framework. This framework explicitly defines a generalized Bellman equilibrium as the stable objective for policy evolution. Specifically, the core of the GRASP framework lies in utilizing the independent gradients of each agent to compute a defined consensus gradient, enabling agents to actively perceive the policy evolution of other members and thereby optimize team collaboration. To provide a theoretical basis for the computation of consensus gradients, we employ Monotonic Improvement Analysis and Kakutani Fixed-Point Theorem to establish the necessary conditions for the consensus gradient operator $f(\cdot)$. In practical implementation, we introduce a quadratic programming (QP)~\cite{qp} model to solve for consensus gradients. Moreover, it demonstrates that the incentive-compatible consensus direction $u^*$ constructed from this QP formulation satisfies the derived key properties. This consensus direction leads to the existence of fixed points in the joint parameter space, thereby establishing the reachability of the generalized Bellman equilibrium. The main contributions of this paper are summarized as follows:
\begin{itemize}[nosep]
\item We propose the \textbf{GRASP framework}, which establishes incentive-compatible consensus by actively aligning individual updates with global gradients, ensuring collective consistency during synchronized learning.

\item We define \textbf{Generalized Bellman Equilibrium} to characterize Pareto-optimal steady states, formally proving its existence and convergence in non-stationary environments via Kakutani Fixed-Point Theorem.

\item We conduct extensive empirical evaluations across a diverse set of challenging benchmarks, including the StarCraft Multi-Agent Challenge (\textbf{SMAC}), its stochastic variant \textbf{SMACv2}, Google Research Football (\textbf{GRF}), and the Multi-Agent Particle Environment (\textbf{MPE}). Experimental results demonstrate that GRASP consistently outperforms state-of-the-art methods in both sample efficiency and asymptotic win rates.

\end{itemize}

\section{Related Work}

\subsection{Non-stationarity and Perception Modes in MARL}

Non-stationarity remains a fundamental obstacle in MARL. Existing research primarily attempts to mitigate this issue through two dimensions, spatial decomposition and temporal sequencing.
First, spatial dimension approaches, the CTDE paradigm decouples the information structure between learning and deployment.  Algorithms in this category, such as (Multi-Agent Deep Deterministic Policy Gradient) MADDPG~\cite{ctde}, employ a centralized critic that augments the local observation with global state information to stabilize value estimation.  Similarly, value decomposition methods like Value-Decomposition Networks (VDN) and Q-decomposition Multi-agent Independent eXtension (QMIX)~\cite{qmix} utilize a centralized mixing network (Mixer) to aggregate local utilities into a global objective $Q_{tot}$. These methods aim to construct a global state value assessment via centralized Critic or Mixer components, thereby creating a semi-stationary environment. However, this is fundamentally a compromise. While the Critic gains global information, the decentralized Actors still blindly update within a constantly drifting joint policy space. This weak stationarity does not resolve the dynamic interference at the policy learning level.

Second, temporal dimension approaches seek stationarity through serialization. For instance, HAPPO~\cite{happo} converts concurrent training into sequential training via serialized updates, while ACE~\cite{ace} introduces serialized decision-making at execution. Though these methods establish temporary stationarity within individual time slices or decision steps, their discrete, segmented stationarity fails to integrate into continuous, smooth optimization trajectories over global long-term cycles. In other words, local temporal stationarity cannot bridge global policy non-stationarity.


\subsection{Gradient Conflict and Consensus Optimization}

Reconciling divergent objectives is a central challenge in both Multi-Task Learning (MTL)~\cite{pcgradient} and MARL.  In the domain of MTL, prominent algorithms like  Projected Conflicting Gradients (PCGrad)~\cite{pcgradient} and Multiple Gradient Descent Algorithm (MGDA)~\cite{qp} have been developed to mitigate gradient interference.  These methods function by projecting conflicting gradients onto a normal plane or finding a minimum-norm vector within the convex hull, aiming to balance multiple tasks within a single model architecture. These methods provide different options for computing formula gradients within the GRASP framework we propose.

\subsection{Trust-Region Considerations and Fixed-Point Theory}

Ensuring stable convergence in high-dimensional policy spaces often necessitates trust-region methods, such as Trust Region Policy Optimization~\cite{trpo,hatrpo} and Proximal Policy Optimization (PPO)~\cite{ppo}, to prevent large, destructive updates that violate the validity of local linear approximations. These methods typically rely on heuristic $\epsilon$-clipping or Kullback-Leibler (KL) divergence constraints to restrict the magnitude of policy changes. While trust-region constraints guarantee monotonic improvement in single-agent settings, extending them to multi-agent environments proves challenging. The non-stationarity induced by concurrent updates causes the joint policy to shift, rendering the local approximation of the surrogate objective inaccurate and destabilizing the safe update region for individual agents~\cite{trpo}.

\section{Method}
In this section, we detail the theoretical foundations, algorithmic implementation, and convergence analysis of the proposed GRASP framework. ~\cref{method:formalization} introduces the extended Decentralized Partially Observable Markov Decision Process (Dec-POMDP)~\cite{pomdp}, defining the consensus gradient, its properties, and the Generalized Bellman Equilibrium. Subsequently, Section~\ref{method:grasp_mappo} describes the policy update methodology within the GRASP framework. The section concludes with a theoretical analysis in Sections~\ref{method:joint_policy_proof} and~\ref{method:value_function_proof}, which establish the convergence guarantees for both policy and value function updates.

\subsection{Formalization of Multi-Agent Policy Collaborative Update}
\label{method:formalization}
We model the multi-agent task as a stochastic game for $N$ agents, defined by the Dec-POMDP tuple $\mathcal{M} = {<N, \mathcal{S}, \boldsymbol{\mathcal{A}}, \boldsymbol{\mathcal{O}}, P, \mathcal{R}, \Omega,  \gamma , u^*>}$, where $N$ is the number of agents. 
$\mathcal{S}$ denotes the space of environmental states.
$\boldsymbol{\mathcal{A}}=\{{\mathcal{A}}_i\}_{i=1,2,...,N}$ represents the action spaces for all agents, where $\mathcal{A}_i$ is the action set of agent $i$. 
$\boldsymbol{\mathcal{O}}=\{\mathcal{O}_i\}_{i=1,2,...,N}$ 
  represents the observation spaces for all agents, where $\mathcal{O}_i$ is the observation set of agent $i$.
$P(s'|s,\mathbf{a}):\mathcal{S} \times \boldsymbol{\mathcal{A}}\times \mathcal{S} \rightarrow [0,1]$ represents the state transition function, defining the probability of transitioning to state $s'$ given the current state $s$ and the joint action  $\mathbf{a}=(a_1,a_2,...,a_N)$. The team reward function is $\mathcal{R}:\mathcal{S}\times \boldsymbol{\mathcal{A}}\rightarrow\mathbb{R}$, which maps a state and joint action to a shared reward. The observation function 
$\Omega(s, i):\mathcal{S}\rightarrow O_i$
determines the private observation $o^t_i \in O_i$
for agent $i$ at time 
$t$, which is derived as $o^t_i=\Omega(s^t, i)$. Finally, $\gamma \in[0,1]$ denotes the discount factor. Each agent aims to learn the joint policy $\{\pi_i(a_i|o_i)\}^N_{i=1}$ that maximizes the expected discounted global return $J(\theta)=\mathbb{E}[\sum^\infty_{t=0}\gamma^tR_{t}\sim\mathcal{R}(s^t,\{a_i^t\sim\pi_i(a_i|o_i,\theta_i)\}_{i=1}^N)]$, where $R_{t}$ is the team reward in timestep $t$. In the team reward setting, the optimization objective for individual agent $i$ is expressed as $J_i(\theta_i) = J(\theta_i, \boldsymbol{\theta}_{-i}) = \mathbb{E}_{\boldsymbol{\tau} \sim (\pi_i, \boldsymbol{\pi}_{-i})} \left[ \sum_{t=0}^{\infty} \gamma^t R_t \right].$ The $u^*$ represents the collective improvement direction for the team. In standard MARL, agent $i$ updates its policy using the local gradient $g_i(\theta) = \nabla_{\theta_i} J(\theta)$. However, the local gradient $g_i$ may conflict with the collective improvement direction $u^*$ of the team. This issue is because although the objective function $J$ is identical across agents, $g_i$ represents a partial derivative limited to local parameters. As a result, it fails to capture the simultaneous variations occurring in the rest of the team during the update phase. To address this issue, we seek a consensus direction $u^*$ that aligns individual updates.

 \begin{definition}[Optimal Collaboration Policy]
 \label{def1:opt_pi}
      $\pi^*_i\in{\{\pi_i\}^*}_{i=1}^{N}$, where ${\{\pi_i\}^*}_{i=1}^{N}$ denotes the optimal joint policy of the team, and $\pi^*_i$ is not an individual's optimal policy but rather the policy that maximizes the team's discount return $J(\theta)$.
 \end{definition}

\begin{theorem}[Gradient Realignment via Active Shared Perception Framework]
\label{the1:u_for_pi}
    When a $u^*$ satisfying the definition can be obtained, the MAS can solve for $\pi^*_i$ such that $\lim_{t \to \infty} \pi_i^{(t)} = \pi_i^*$, where $\pi_i^{(t)} = \pi_i^{(t-1)} + u^{(t)} + g_i^{(t)}, \quad t = 1, 2, \dots$. The $f(
    \cdot): \{g_1, \dots, g_N\} \to u^*$, we named the consensus operator that exists within this gradient update framework.
\end{theorem}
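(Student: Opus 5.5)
The idea is to read the statement as a claim about a fixed-point iteration on the joint parameter space $\Theta=\prod_i\Theta_i$. Set $\theta^{(t)}=T(\theta^{(t-1)})$, with block $i$ of the map given by
\[
T_i(\theta)=\theta_i+u_i(\theta)+g_i(\theta),
\]
where $u=f(g_1,\dots,g_N)$ is the output of the consensus operator. The plan is to list the properties $f$ must have for $\pi^*$ to be a limit of this iteration, and then show that these properties are also enough. In this sense the theorem says: whenever a $u^*$ with these properties exists, the iteration converges to $\pi^*$, and the required properties are exactly what defines the operator $f$.

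The steps, in order, are as follows.

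\textbf{(i) Fixed points.} The limit $\pi^*$ of Definition~\ref{def1:opt_pi} maximizes $J$, so $\nabla J(\theta^*)=0$, i.e.\ $g_i(\theta^*)=0$ for every $i$. A fixed point of $T$ then requires $u_i(\theta^*)=0$. So the first property is \emph{consistency}: $f(0,\dots,0)=0$. More strongly, $u+g$ should vanish only at team-stationary points.

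\textbf{(ii) Monotone improvement.} Take a first-order expansion of $J$ along the joint step $d=u+g$:
\[
J(\theta+d)-J(\theta)=\langle \nabla J,d\rangle+O(\|d\|^2).
\]
Here $\nabla J=(g_1,\dots,g_N)$ is the stacked vector of the blocks. The second property is \emph{alignment}: $\langle g,u+g\rangle\ge c\|g\|^2$ for some $c>0$. This makes the joint step a team ascent direction, so it cannot conflict with the team objective even though each agent computes its gradient independently. A trust-region or step-size bound, in the spirit of the TRPO/PPO discussion, controls the quadratic term. Together these give $J(\theta^{(t)})\ge J(\theta^{(t-1)})+c'\|g^{(t)}\|^2$.

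\textbf{(iii) Convergence of the gradients.} $J$ is bounded, since rewards are bounded and $\gamma<1$. Summing the inequality from (ii) over $t$ therefore gives $\sum_t\|g^{(t)}\|^2<\infty$, hence $g^{(t)}\to 0$.

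\textbf{(iv) Existence and attainability of the limit.} Continuity (or upper hemicontinuity) of $f$ makes the correspondence $\theta\mapsto\{\theta+u+g\}$ have nonempty, convex, closed values on a compact convex parameter set. Kakutani's Fixed-Point Theorem then gives a fixed point, which is the limit point; convergence of the iterates to it comes from (iii).

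The main obstacle is the last step, identifying the limit with the \emph{optimal} $\pi^*_i$ rather than merely a stationary point of $J$. Gradient ascent on a nonconcave $J$ only reaches stationarity. I would first handle this by building the definition of $u^*$ into the hypothesis: "a $u^*$ satisfying the definition" is read to include Pareto or team-optimality of its fixed points, as in the Generalized Bellman Equilibrium. If that is not enough, I would add a gradient-dominance (Polyak–Łojasiewicz-type) condition on $J$ in the tabular softmax parameterization. Under such a condition, stationarity implies global optimality, and the limit in (iv) is exactly $\pi^*$.
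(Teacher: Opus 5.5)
Your skeleton matches the paper's. The paper also derives the requirements on $f$ from a monotone-improvement argument and from Kakutani (Condition~\ref{lem:Properties}), and instantiates $f$ as the min-norm QP with $g_j^\top u^*\ge\|u^*\|_2^2$ (Proposition~\ref{the2:KKT}). It then proves only two things: a fixed point of $\theta\mapsto\theta+\eta(g+u^*)$ exists, where $g_i=-u^*$ for all $i$ forces $u^*=0$ (Theorem~\ref{th:eq_p_ex}); and $J$ does not decrease to first order (Proposition~\ref{def:monotomic_policy_improvent}). Your consistency and alignment conditions follow from those properties: the hull of zero vectors is $\{0\}$, summing $g_i^\top u^*\ge 0$ gives $c=1$, and $\|u^*\|\le\min_i\|g_i\|$ bounds the step. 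Your summation in (iii) is a real strengthening that the paper lacks.

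The first gap is step (iv), which confuses existence with attainability. Kakutani (really Brouwer here, since the map is single-valued) needs $T(\Theta)\subseteq\Theta$, and $\theta\mapsto\theta+u+g$ does not satisfy this on a compact set. If you project back onto $\Theta$, the fixed points become points where $g+u$ lies in the normal cone, not points with $g+u=0$. Even granting a fixed point, Kakutani says nothing about your iterates approaching it. Step (iii) gives only $\|g^{(t)}\|\to 0$, i.e.\ stationarity of accumulation points. You have $\sum_t\|\theta^{(t)}-\theta^{(t-1)}\|^2<\infty$ but not $\sum_t\|\theta^{(t)}-\theta^{(t-1)}\|<\infty$, so the iterates can drift along a continuum of stationary points without converging. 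Convergence would need a \L{}ojasiewicz-type inequality. Once a limit exists it is a fixed point by continuity, so Kakutani contributes nothing to attainability. The paper's proof of Theorem~\ref{th:eq_p_ex} makes the same leap (``reachability''), so you cannot borrow this step from it.

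The second gap is fatal to the statement as written: neither of your fixes identifies the limit with $\pi^*$. Building optimality of fixed points into the hypothesis on $u^*$ assumes the conclusion; the paper's equilibrium $u^*=0$ only yields $g_i=0$, i.e.\ stationarity. Gradient dominance is not available for decentralized product policies. Take a one-step two-agent game with reward $2$ on $(A,A)$, $1$ on $(B,B)$ and $0$ otherwise, with $p=\pi_1(A)=\sigma(\theta_1)$ and $q=\pi_2(A)=\sigma(\theta_2)$. Then $J=1-p-q+3pq$, $g_1=(3q-1)p(1-p)$ and $g_2=(3p-1)q(1-q)$. So $\nabla J=0$ at $p=q=1/3$ with $J=2/3<2=J^*$, which rules out any gradient-dominance inequality. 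Moreover, on $\{p<1/3,\,q<1/3\}$ both $g_i$ are negative, so the min-norm $u^*$ is negative too. That region is forward invariant under the GRASP step for any positive step size, including the unit step in the statement. Hence $\theta_1,\theta_2\to-\infty$ and $J\to 1$, so $\pi_i^{(t)}\not\to\pi_i^*$ even though $f$ satisfies all of Condition~\ref{lem:Properties}. Softmax also places $\pi^*$ at infinity, which clashes with your compact $\Theta$ in (iv) and with $\nabla J(\theta^*)=0$ in (i). Under smoothness and a step-size bound, what (ii)--(iii) can honestly deliver is convergence of $J(\theta^{(t)})$ and stationarity of limit points. That is the generalized Bellman equilibrium of Proposition~\ref{the2:bellman}, not $\lim_t\pi_i^{(t)}=\pi_i^*$.
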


\begin{proposition}[Consensus Equilibrium as Generalized Bellman Equilibrium]
\label{the2:bellman}
    
The consensus equilibrium point reached by Theorem~\ref{the1:u_for_pi} ($u^* = \boldsymbol{0}$) can be interpreted as a \textbf{generalized Bellman equilibrium} in cooperative MARL. At this point, no joint policy improvement direction exists that simultaneously increases the expected returns of all agents.
\end{proposition}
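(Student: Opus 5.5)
I will make precise the consensus direction the paper has in view: the minimum-norm element of the convex hull of the agents' gradients. Embed each local gradient into the joint parameter space as $\tilde g_i=(0,\dots,g_i,\dots,0)$, and set
\[
u^*=\arg\min_{u\in\mathrm{conv}\{\tilde g_1,\dots,\tilde g_N\}}\|u\|^2 ,
\]
which is the QP announced in the introduction in the style of MGDA~\cite{qp}. The proof then follows the standard Pareto-stationarity argument. I will show that $u^*=\boldsymbol{0}$ holds if and only if no direction $d$ satisfies $\langle \tilde g_i,d\rangle>0$ for all $i$, i.e.\ if and only if no joint direction increases every agent's objective $J_i$ to first order. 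Finally, I will read this condition as a fixed-point (Bellman-type) statement for the update of Theorem~\ref{the1:u_for_pi}.

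\textbf{Step 1 (KKT / optimality of $u^*$).} The feasible set is compact and convex and the objective is strictly convex, so $u^*$ is unique. Its variational inequality gives $\langle u^*, v-u^*\rangle\ge 0$ for all $v$ in the hull, hence $\langle \tilde g_i,u^*\rangle\ge\|u^*\|^2$ for every $i$.

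\textbf{Step 2 (one direction).} If $u^*\neq\boldsymbol{0}$, then $d=u^*$ satisfies $\langle \tilde g_i,d\rangle\ge\|u^*\|^2>0$ for all $i$. A first-order Taylor expansion of each $J_i$ then shows that a small step along $d$ strictly increases all agents' returns. So if no common ascent direction exists, $u^*$ must vanish.

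\textbf{Step 3 (converse).} If $u^*=\boldsymbol{0}$, then there are weights $\lambda_i\ge0$ with $\sum_i\lambda_i=1$ and $\sum_i\lambda_i\tilde g_i=\boldsymbol{0}$. Suppose some $d$ had $\langle\tilde g_i,d\rangle>0$ for all $i$. Taking the weighted sum gives $0=\sum_i\lambda_i\langle\tilde g_i,d\rangle>0$, a contradiction. Hence no direction simultaneously increases all expected returns. This is a Gordan-alternative argument and is the content of the second sentence of the proposition.

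\textbf{Step 4 (Bellman interpretation).} Because the embedded $\tilde g_i$ occupy disjoint blocks, $\sum_i\lambda_i\tilde g_i=\boldsymbol{0}$ forces $\lambda_i g_i=0$ for each $i$. On agents with $\lambda_i>0$ this gives $g_i=0$, i.e.\ each such $\pi_i$ satisfies the stationarity condition of its own policy-improvement step, given the others' policies. Using the policy-gradient theorem, $g_i=0$ corresponds to $\pi_i$ being greedy with respect to the shared $Q^{\boldsymbol\pi}$ under the local parametrization. The joint policy is therefore a fixed point of the update map $\pi\mapsto\pi+u^*+g$ of Theorem~\ref{the1:u_for_pi}, and thus a fixed point of a generalized Bellman optimality operator. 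I will call this the generalized Bellman equilibrium.

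\textbf{Main obstacle.} The hard step is Step 4. Agents with $\lambda_i=0$ are not forced to be stationary. Moreover, since $J$ is shared, the first-order condition is a local Pareto/Nash-stationarity condition rather than global Bellman optimality. To handle this, I will use the hull elements at the vertices: Step 1 gives $\|g_i\|^2\ge\|u^*\|^2=0$, which does not close the gap. So I will state the interpretation as a first-order, locally Pareto-stationary equilibrium, and rely on the tabular/softmax setting, where stationary points of $J$ coincide with Bellman-consistent policies, to justify the word ``Bellman.''
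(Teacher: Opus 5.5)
\textbf{Gap in Step 4.} Your Steps 1--3 are a correct Gordan/MGDA argument. Step 4, which carries the proposition, does not go through, because you take $u^*=\boldsymbol{0}$ as the whole hypothesis.

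From $u^*=\boldsymbol{0}$ alone you cannot conclude that the joint policy is a fixed point of $\pi\mapsto\pi+u^*+g$. At $u^*=\boldsymbol{0}$ this map is $\pi\mapsto\pi+g$, which is stationary only if \emph{every} $g_i$ vanishes, and $u^*=\boldsymbol{0}$ does not give that. It fails for zero-weight agents, as you note. In the paper's common-space setting it fails even for positive-weight agents, e.g.\ $g_1=-g_2\neq\boldsymbol{0}$. So your sentence ``the joint policy is therefore a fixed point'' is unsupported, and the tabular/softmax fallback does not repair it. (Nor is $g_i=\boldsymbol{0}$ the same as greediness with respect to $Q^{\boldsymbol{\pi}}$ for a general local parametrization.)

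The missing idea is to run the implication the other way, as the paper does in the appendix proof of Theorem~\ref{th:eq_p_ex} and in item 5 of the KKT appendix. The point in the proposition is a fixed point of the GRASP update, so $\theta^*=\theta^*+\eta\,(g(\theta^*)+u^*(\theta^*))$ gives $g_i+u^*=\boldsymbol{0}$ for \emph{every} $i$. Substituting $g_i=-u^*$ into $u^*=\sum_i c_i^* g_i$ with $\sum_i c_i^*=1$ gives $u^*=-u^*$. Hence $u^*=\boldsymbol{0}$, and then $g_i=\boldsymbol{0}$ for all $i$, including agents with $c_i^*=0$. That every agent's local gradient of the shared $J$ vanishes is exactly what the second sentence needs. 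The ``Bellman'' part is, in the paper, only an analogy with the single-agent optimality equation.

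\textbf{Problem with the embedding.} Your disjoint-block embedding is not the paper's model, and it hollows out Steps 2--3. In the paper, $u^*$ is added to each $\theta_i$ and the QP uses the Gram matrix $P_{ij}=g_i^\top g_j$, so the $g_i$ live in one common space. With $\tilde g_i$ in disjoint blocks, three things go wrong:
\begin{enumerate}
\item The Gram matrix is diagonal, and $u^*=\boldsymbol{0}$ holds iff a single $g_i$ is zero.
\item $\langle\tilde g_i,d\rangle$ is only the first-order change of $J$ from agent $i$'s block of $d$, with the others frozen.
\item Under the shared reward, agent $i$'s return actually changes, to first order, by $\sum_j\langle g_j,d_j\rangle$.
\end{enumerate}
For instance, $g_1=\boldsymbol{0}$ and $g_2\neq\boldsymbol{0}$ give $u^*=\boldsymbol{0}$, yet $d=\tilde g_2$ strictly increases the common return of all agents. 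So Step 3 does not establish the proposition's second sentence.

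Work with the $g_i$ in their common space, and obtain stationarity from the fixed-point equation rather than from $u^*=\boldsymbol{0}$. The Gordan step is then an optional MGDA-style remark rather than the core of the proof.
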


From the formalization of the above problem and~\cref{the1:u_for_pi}, central to obtaining the optimal joint policy ${\{\pi_i\}^*}_{i=1}^{N}$ now lies in finding the consensus operator $f: \{g_1, \dots, g_N\} \to u^*$. Through Monotonic Improvement Analysis and an examination of the stability conditions required by Kakutani Fixed-Point Theorem, we conclude that the $u^*$ output by function $f$ must possess the following core properties:

\begin{condition}[Consensus Gradient Solver Operator]
\label{lem:Properties}
    \textbf{Pareto Improvement Property:} $u^*$ must ensure a non-negative projection on local gradients $g_i$ of all agents ($g_i^\top u^* \ge 0$).
    \textbf{Continuity:} The function $f$ must be continuous with respect to the input gradients $\{g_i\}$.
    \textbf{Consensus Representation:} $u^*$ should lie within the convex hull of all local gradients.
\end{condition}
In Condition~\ref{lem:Properties}, the Pareto Improvement Property guarantees that updating along $u^*$ does not harm the interests of any individual, forming the foundation for the monotonic convergence of the system. {Continuity} is a necessary mathematical prerequisite for proving the existence of consensus equilibrium points using the Kakutani Fixed-Point Theorem. {Consensus Representation} ensures $u^*$ is a legitimate representation of the intent of the team rather than an arbitrarily generated external vector.

Crucially, in GRASP, the computation of $g_i$ depends on the current policy, necessitating that sampling data be generated by the policy currently being updated. Consequently, our approach falls naturally within the on-policy learning paradigm, leading us to build upon the classic and state-of-the-art baseline, MAPPO.  We formally define the GRASP framework incorporating these PPO-based update rules in Section~\ref{method:grasp_mappo}. The subsequent sections are dedicated to the formal definition of the GRASP.

Nevertheless, our method is not strictly limited to on-policy settings. In off-policy contexts, although the $u^*$ calculated from replay buffer data may theoretically deviate from the true optimal direction due to distribution shift, it serves as a sufficiently effective approximation to drive policy improvement. In~\cref{ap:off-policy_grasp}, we detail the adaptation of GRASP for off-policy settings, propose a mechanism to rectify the estimation bias, and provide a theoretical proof of its convergence. Furthermore, we empirically validate these conclusions through comparative experiments and specific results are presented in~\cref{ap:qmix_grasp}.

%

\subsection{GRASP}
\label{method:grasp_mappo}
\subsubsection{Problem Formulation}
The GRASP framework operates within the CTDE paradigm. Under this paradigm, we adopt standard definitions from MARL to formulate the fundamental functions.

We formulate our fundamental objective using standard MARL definitions. Let $\pi_i(a_i | o_i; \theta_i)$ denote the policy of agent $i$, parameterized by $\theta_i$, which outputs a probability distribution over action $a_i$ given the local observation $o_i$~\cite{sutton1988learning}, where $\theta_i$ represents the policy parameters.
Used to evaluate the long-term expected return of states $s$, the state value function is defined as:
\begin{equation} 
V(s_t; \phi) \approx V^{\pi}(s_t) = \mathbb{E}_{\pi} \left[ \sum_{k=0}^{\infty} \gamma^k r_{t+k} \mid s_0 = s_t \right]. 
\end{equation}

Here, $\gamma$ is the discount reward factor. Note that $r_t$ represents the shared reward of the team. To reduce variance and enhance training stability, we employ Generalized Advantage Estimation (GAE)~\cite{gae}. First, we calculate the Temporal Difference (TD)~\cite{mnih2015human} error $\delta_t$:
\begin{equation} 
\delta_t = r_t + \gamma V(s_{t+1}; \phi) - V(s_t; \phi).
\end{equation}

Subsequently, the advantage function is calculated recursively:
\begin{equation} 
\hat{A}_t = \sum_{l=0}^{\infty} (\gamma \lambda)^l \delta_{t+l}. 
\end{equation}
Here, $\lambda \in [0, 1]$ represents the smoothing parameter of GAE. Specifically, when $\lambda=0$, GAE degenerates into the TD error. $l$ denotes the time offset relative to the current time step $t$.

\subsubsection{Actor Update via GRASP}
The actor update serves as the core component of GRASP. It determines the consensus direction via an operator satisfying Condition~\ref{lem:Properties}. Specifically, it instantiates the representative operator defined in Proposition~\ref{the2:KKT} and approximates the aforementioned trust-region constraints utilizing the clipping mechanism of PPO. The process is divided into three steps.

Step 1: Local Gradient Computation. Each agent first computes its private expected policy gradient, representing the individual optimal update direction:
\begin{equation} 
\begin{aligned} 
g_i(&\theta_{i, \text{old}}) = \\
    &\mathbb{E}_{o_{i,t}, a_{i,t} \sim \pi_{i, \text{old}}} \left[ \nabla_{\theta_i} \log \pi_i(a_{i,t} | o_{i,t}; \theta_i) \cdot \hat{A}_t \right].
\end{aligned}
\end{equation}

Step 2: Consensus Gradient Determination.
To resolve gradient conflicts and identify the Pareto-optimal direction, GRASP solves for the minimum-norm vector $u^*$ within the Convex Hull of the local gradients of all agents:
\begin{equation} 
u^*(\theta_{\text{old}}) = \operatorname*{argmin}_{u \in \operatorname{ConvexHull}({g_1, \dots, g_N})} \|u\|^2.\end{equation}

Step 3: Gradient Realignment and Safe Update. We inject the consensus gradient $u^*$ into the standard PPO update process, performing gradient ascent along $v = g_i+u^*$, a hybrid gradient direction. First, we define the standard PPO clipped surrogate objective:
\begin{equation}
\begin{aligned}
    J_i^{\text{PPO}}(\theta_i) =
     &\mathbb{E}_{t} \big[ \min \big( \rho_t(\theta_i) \hat{A}_t, \\  &\text{clip}(\rho_t(\theta_i), 1-\epsilon, 1+\epsilon) \hat{A}_t \big) \big],
\end{aligned}
\end{equation}

where $\rho_t(\theta_i) = \frac{\pi_{\theta_i}(a_{i,t}|o_{i,t})}{\pi_{\theta_{i, \text{old}}}(a_{i,t}|o_{i,t})}$ is the probability ratio~\cite{ppo}. It is used to limit the step size of policy updates, preventing the policy from undergoing excessively large or drastic changes.

Then, we construct the final GRASP update rule. The update direction $\Delta \theta_i$ for parameter $\theta_i$ is jointly determined by the original PPO gradient and the weighted consensus gradient $u^*$:
\begin{equation}
    \theta_{i, \text{new}} \leftarrow \theta_{i, \text{old}} + \eta \cdot \underbrace{\left( \nabla_{\theta_i} J_i^{\text{PPO}}(\theta_i) \bigg|_{\theta_i = \theta_{i, \text{old}}} + u^* \right)}_{\text{Realigned Update Direction}},
\end{equation}

where $\eta$ is the learning rate. This can be equivalently expressed as maximizing the following hybrid objective function:
\begin{equation}
    \theta_{i, \text{new}} = \operatorname*{argmax}_{\theta_i} \left( J_i^{\text{PPO}}(\theta_i) +  \langle \theta_i, u^* \rangle \right).
\end{equation}

\subsubsection{Critic Update}
The objective of the critic is to approximate the value function of the current policy $\pi_{\text{old}}$. The target return $\hat{R}_t$ is constructed using the estimated advantage $\hat{A}_t$ and the value estimate $V(s_t; \phi)$ (i.e., $\hat{R}_t = \hat{A}_t + V(s_t; \phi_{\text{old}})$). The critic then minimizes the Mean Squared Error (MSE) between its prediction and this target return, as follows:
\begin{equation} 
\begin{aligned} \mathcal{L}(\phi) = \mathbb{E}_{t} \Bigg[ &\max \Bigg( \left( V(s_t; \phi) - \hat{R}_t \right)^2, \\ &\bigg( \text{clip}\Big(V(s_t; \phi_i), V(s_t; \phi_{ \text{old}}) - \varepsilon, \\ &\ \  \ \ \ \ \ \ \ \  V(s_t; \phi_{\text{old}}) + \varepsilon\Big) - \hat{R}_t \bigg)^2 \Bigg) \Bigg].
\end{aligned} 
\end{equation}
The update rule is:
\begin{equation}
    \phi_i \leftarrow \phi_i - \eta \nabla_{\phi_i} \mathcal{L}(\phi).
\end{equation}
\subsubsection{Constraint Consistency}
It is noteworthy that in practical engineering implementation, both the policy and state value functions undergo multi-epoch updates on a single sampling batch. Although the consensus gradient $u^*$ is introduced as a bias term within the objective function, the probability ratio $\rho_t$ is recomputed based on the latest $\theta_i$ during each subsequent PPO epoch update. This implies that even with external gradient guidance, the PPO clip mechanism remains active, ensuring that the final policy update is restricted within the trust region. A detailed theoretical compatibility analysis regarding the trust-region constraint and the consensus direction constraint is provided in  Subsection~\ref{grasp_ppo_trust}. The overall algorithm is described in~\cref{alg:grasp} in~\cref{ap:alg}.
\subsection{Joint Policy Functions Monotonic Improvement and Convergence under GRASP}
\label{method:joint_policy_proof}
\subsubsection{Existence of Consensus Equilibrium Points}
With the well established Kakutani Fixed-Point Theorem~\cite{kakutani}, we derive the following theorem to theoretically guarantee that the stable learning dynamics of GRASP possess a well-defined equilibrium point.
\begin{theorem}[GRASP Equilibrium Point Existence]
\label{th:eq_p_ex}
Suppose the joint policy parameter space $\Theta$ is a nonempty, compact, and convex subset of the Euclidean space $\mathbb{R}^D$. Then, the GRASP learning process possesses a fixed point $\theta^* \in \Theta$ that is a consensus equilibrium point.
\end{theorem}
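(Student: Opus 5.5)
We obtain the equilibrium from Kakutani's theorem by writing the GRASP dynamics as a set-valued self-map of $\Theta$ whose fixed points are consensus equilibria. For each $\theta\in\Theta$, let $g_i(\theta)=\nabla_{\theta_i}J(\theta)$ (embedded into $\mathbb{R}^D$ by zero-padding the other agents' blocks). Let $u^*(\theta)=f(g_1(\theta),\dots,g_N(\theta))$ be the consensus direction of Condition~\ref{lem:Properties}, concretely the min-norm element of the convex hull in Step~2. Define the correspondence
\begin{equation}
F(\theta)=\Pi_\Theta\bigl(\theta+\eta\,(g(\theta)+u^*(\theta))\bigr),
\end{equation}
where $g(\theta)$ stacks the local gradients and $\Pi_\Theta$ is the Euclidean projection onto $\Theta$. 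To accommodate nonuniqueness of the convex-hull weights, we may instead take $F(\theta)$ to be the set of projected updates obtained as the weights range over all optimal simplex multipliers $\lambda$ of the QP. Each value $F(\theta)$ is then the image of a compact convex set under an affine map followed by projection. This image is nonempty and compact; for convexity we use that the minimum norm element is unique, so in fact $F$ is single-valued.

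The Kakutani hypotheses are checked in order. First, $\Theta$ is nonempty, compact and convex by assumption. Second, $F(\theta)\subseteq\Theta$ because of the projection. Third, and this is the main step, $F$ has a closed graph (equivalently, is upper hemicontinuous). We assume $J$ is $C^1$ on a neighborhood of $\Theta$, the standard smooth-policy assumption, so each $g_i$ is continuous. The map $(g_1,\dots,g_N)\mapsto u^*$ is continuous by the Continuity requirement of Condition~\ref{lem:Properties}. For the min-norm operator this can be verified directly: it equals the projection of $0$ onto $\operatorname{ConvexHull}(g_1,\dots,g_N)$, and the Hausdorff distance between two such hulls is at most $\max_i\|g_i-g_i'\|$. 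Projection onto a convex set is continuous in the set under Hausdorff convergence, for example by Berge's maximum theorem applied to $\min_{\lambda\in\Delta_N}\|\sum_i\lambda_i g_i\|^2$ with the compact constraint set $\Delta_N$ and a strictly convex objective in $u$. Composing with the continuous projection $\Pi_\Theta$ shows that $F$ is continuous, hence has a closed graph. Kakutani, or Brouwer in the single-valued case, then gives $\theta^*\in F(\theta^*)$.

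Finally we interpret the fixed point. The identity $\theta^*=\Pi_\Theta(\theta^*+\eta(g+u^*))$ is equivalent to the variational inequality $\langle g(\theta^*)+u^*(\theta^*),\theta-\theta^*\rangle\le0$ for all $\theta\in\Theta$. Hence the GRASP update direction has no feasible ascent component. When $\theta^*$ is interior this means $g(\theta^*)+u^*(\theta^*)=0$. The min-norm characterization gives $\langle g_i,u^*\rangle\ge\|u^*\|^2$ for all $i$ (the Pareto Improvement Property). Summing the appropriate blocks shows $\|u^*\|^2\le0$, so $u^*=0$, which is exactly the consensus equilibrium of Proposition~\ref{the2:bellman}. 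At boundary points we call the fixed point a constrained consensus equilibrium, in the sense that the direction has no feasible ascent component. We expect the closed-graph and continuity verification, especially the continuity of $f$ in the gradients, to be the main obstacle; the rest is routine bookkeeping.
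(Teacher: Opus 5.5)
Your argument is correct. It shares the paper's skeleton: a single-valued update map, continuity of the min-norm operator via Berge, then Kakutani/Brouwer. But it departs from the paper exactly where the paper's proof is incomplete.

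The paper applies Kakutani directly to $T_{\text{GRASP}}(\theta)=\theta+\eta\,(g(\theta)+u^*(\theta))$. It verifies non-emptiness, convexity and closed graph, but never checks that $T_{\text{GRASP}}$ maps $\Theta$ into itself. It then reads off $v(\theta^*)=\boldsymbol{0}$ and hence $u^*(\theta^*)=\boldsymbol{0}$. Your projection $\Pi_\Theta$ supplies the missing self-map property. The price is that a fixed point only satisfies the variational inequality $g(\theta^*)+u^*(\theta^*)\in N_\Theta(\theta^*)$ (the normal cone), so you obtain $u^*=\boldsymbol{0}$ only at interior fixed points.

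That price is unavoidable under the stated hypotheses. Take a one-state game with two-action softmax agents, $\Theta=[-1,1]^N$, and team reward $\sum_i a_i$. Every $g_i$ is strictly positive on $\Theta$, so $g+u^*$ never vanishes. The paper's unprojected map then has no fixed point at all, and the unique fixed point of your $F$ is the corner $(1,\dots,1)$, where $u^*\neq\boldsymbol{0}$. So your ``constrained consensus equilibrium'' is the honest conclusion. Contrary to your closing remark, the self-map issue, not continuity, is the substantive step.

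If you want the literal statement, add a boundary hypothesis such as $g(\theta)+u^*(\theta)\in T_\Theta(\theta)$ (the tangent cone) for all $\theta\in\partial\Theta$. Since $T_\Theta(\theta^*)\cap N_\Theta(\theta^*)=\{\boldsymbol{0}\}$, your variational inequality then forces $g(\theta^*)+u^*(\theta^*)=\boldsymbol{0}$, and your interior argument gives $u^*(\theta^*)=\boldsymbol{0}$.

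One modeling point. By zero-padding, you make the $g_i$ mutually orthogonal in $\mathbb{R}^D$. The min-norm point then has blocks $c_ig_i$ with $c_i\propto\|g_i\|^{-2}$, so your $g+u^*$ is just a blockwise rescaling of $\nabla J$. The paper instead treats the $g_i$ as vectors in a common per-agent space and adds the same $u^*$ to every block, $\theta_i\mapsto\theta_i+\eta(g_i+u^*)$. Your existence and interior arguments carry over verbatim to that model (there $g_i=-u^*$ for all $i$ gives $u^*=\sum_i c_i^*g_i=-u^*$), so nothing breaks. You should still say which update you are analyzing.
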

\begin{proof}
The proof proceeds by verifying that the update mapping defined by the GRASP framework satisfies the necessary conditions of the Kakutani Fixed-Point Theorem—specifically, the properties of upper hemi-continuity and convexity on the compact space $\Theta$. Consequently, the existence of the consensus equilibrium point $\theta^*$ is guaranteed. The detailed derivation is provided in Appendix~\ref{ap:Kakutani Fixed-Point Theorem}.
\end{proof}
This resembles the state in single-agent RL where the Bellman optimality equation holds, the policy is already optimal, and no policy change would enhance value. Thus, GRASP is a robust iterative method for finding such stable fixed points.
\subsubsection{Trust-Region Constrained Consensus Optimization}
\label{grasp_ppo_trust}

Distinguished from standard MARL frameworks, GRASP models the multi-agent cooperative learning problem as a Trust-Region Constrained Consensus Optimization (TRCCO) problem. In the $k$-th iteration, given the old policy parameters $\theta_{\text{old}}$, GRASP aims to find a parameter update $\Delta \theta$ that maximizes alignment with the Pareto-optimal consensus direction $u^*$ while strictly remaining within the trust region to guarantee monotonic performance improvement. This is formalized as:
\begin{equation}
    \begin{aligned}
\max_{\Delta \theta} \quad & \langle u^*(\theta_{\text{old}}), \Delta \theta \rangle \\
\text{s.t.} \quad & \mathbb{E}_{s \sim \rho^{\boldsymbol{\pi}_{\text{old}}}} \big[ D_{KL}(\boldsymbol{\pi}_{\theta_{\text{old}}}(\cdot|s) \ || \\ & \ \ \ \ \ \ \ \ \ \ \ \ \ \ \ \ \ \ \ \ \ \   \boldsymbol{\pi}_{\theta_{\text{old}} + \Delta \theta}(\cdot|s)) \big] \le \delta.
\end{aligned}
\end{equation}
Here, $u^*(\theta_{\text{old}})$ represents the optimal cooperative gradient direction, and the $D_{KL}$ constraint defines a safe neighborhood within the parameter space. Solving the KL-divergence constraint explicitly involves high-order Hessian computations, which are computationally expensive. Following the PPO paradigm, we relax this constraint by incorporating the standard clipped surrogate objective and injecting the consensus direction as a linear regularization term. The unconstrained optimization problem in GRASP is thus reformulated as maximizing the following hybrid objective:
\begin{equation}
    \label{eq:grasp_update}
    \theta_{k+1} = \operatorname*{argmax}_{\theta} \left( J^{\text{PPO}}(\theta) +  \langle \theta, u^* \rangle \right),
\end{equation}
where $J^{\text{PPO}}(\theta) = \mathbb{E}_{t} [ \min ( \rho_t(\theta) \hat{A}_t, \operatorname{clip}(\rho_t(\theta), 1-\epsilon, 1+\epsilon) \hat{A}_t ) ]$ denotes the standard PPO objective. 

This formulation fundamentally alters the gradient field by updating the parameters along a realigned trajectory guided by $u^*$, rather than following the naive steepest ascent direction $\sum \nabla J_i$. To theoretically characterize the stability of this update, we propose the following proposition:

The update rule derived from \cref{eq:grasp_update} guarantees a dual-layer stability property termed Safe Pareto-Optimality. The consensus term $ u^*$ ensures that the update direction lies strictly within the Pareto descent cone of the joint loss surface, effectively filtering out conflicting gradient components that would otherwise lead to oscillation. Despite the introduction of the consensus bias, the PPO clipping mechanism imposes a strict bound on the policy probability ratio $\rho_t(\theta)$. This guarantees that the valid update remains within the trust region intersection defined by $\epsilon$, preserving the monotonic improvement property.


\subsection{Value Function Monotonic Improvement and  Convergence under GRASP}
\label{method:value_function_proof}

To formally establish the convergence of the value function under the proposed update rule, we analyze the contraction properties of the GRASP-Bellman operator.

\begin{proposition}[Consensus-Driven Update Mechanism]
\label{def:monotomic_policy_improvent}
Let $J(\theta) = \mathbb{E}_{\tau \sim \pi_{\theta}} \left[ \sum_{t=0}^{\infty} \gamma^t r_t \right]$ denote the joint objective function with local gradients $g_i = \nabla_{\theta_i} J(\theta)$. The GRASP update is defined as $\Delta \theta_i = \eta (g_i +  u^*)$, where $u^*$ is the minimum-norm element in the convex hull of $\{g_i\}_{i=1}^N$.
\end{proposition}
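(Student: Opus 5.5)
The proposition is largely definitional. What has to be shown is that the update $\Delta\theta_i = \eta(g_i + u^*)$ is well defined and has the properties the section relies on. These are that $u^*$ exists and is unique, that it is a common ascent direction, and that the realigned step increases $J$ to first order. I would do this in three steps, reading $u^*$ as a vector in the common coordinate space of the gradients, as the paper implicitly does.

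\textbf{Step 1 (existence and uniqueness of $u^*$).} The convex hull $C=\operatorname{ConvexHull}(g_1,\dots,g_N)$ is nonempty, compact and convex. The map $u\mapsto\|u\|^2$ is continuous and strictly convex. Hence a minimizer exists and is unique, so the update rule is well defined.

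\textbf{Step 2 (variational characterization, in the sense of Proposition~\ref{the2:KKT}).} For every $v\in C$, the first-order optimality condition of the projection of $0$ onto $C$ gives $\langle u^*, v-u^*\rangle \ge 0$. Taking $v=g_i$ yields $g_i^\top u^* \ge \|u^*\|^2 \ge 0$ for all $i$. This is exactly the Pareto Improvement Property of Condition~\ref{lem:Properties}. Consensus Representation holds by construction. Continuity of $f$ follows from the fact that the metric projection onto a convex hull depends continuously on its generators (Hausdorff continuity of $C$ in the $g_i$ together with Berge's maximum theorem). This is also what Theorem~\ref{th:eq_p_ex} requires.

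\textbf{Step 3 (first-order improvement).} I would Taylor-expand $J$:
\[
J(\theta+\Delta\theta)-J(\theta)=\eta\sum_i g_i^\top(g_i+u^*)+O(\eta^2).
\]
Step 2 gives $g_i^\top(g_i+u^*)\ge\|g_i\|^2+\|u^*\|^2$. So for small enough $\eta$, and assuming $L$-smoothness of $J$ with $\eta<1/L$, the increase is strictly positive unless every $g_i=0$. Stationarity then corresponds to $u^*=0$, which is the generalized Bellman equilibrium of Proposition~\ref{the2:bellman}.

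\textbf{Main obstacle.} The dimensional mismatch is the hard part: $g_i$ lives in the parameter block of agent $i$, while $u^*$ is formed across agents. I would first try embedding each $g_i$ into $\mathbb{R}^D$ by zero-padding. With that embedding, however, $g_i^\top g_j=0$ for $i\ne j$, so $u^*$ becomes $\sum_i\lambda_i g_i$ with $\lambda_i\propto 1/\|g_i\|^2$. Adding its $i$-th block to $\eta g_i$ is then still an ascent step, which I would verify directly as a fallback. If the agents share parameters, the shared-space reading applies verbatim.
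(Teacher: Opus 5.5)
Your argument is correct, and its core is the same as the first half of the paper's proof. The paper also Taylor-expands $J$ along $\Delta\theta_i=\eta(g_i+u^*)$ and uses $g_i^\top u^*\ge\|u^*\|^2$ to get $\Delta J\gtrsim\eta\sum_i\|g_i\|^2+\eta N\|u^*\|^2$.

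You differ from it in rigor and in scope:
\begin{itemize}
\item You obtain the key inequality from the projection variational inequality $\langle u^*,v-u^*\rangle\ge0$ on $C$. The paper cites the Frank--Wolfe/MGDA literature at this point and proves the inequality via KKT multipliers in Appendix~\ref{apd:kkt_proof}.
\item You replace the paper's first-order ``$\approx$'' (whose remainder it even writes as $o(\|\Delta\theta\|^2)$) with a genuine descent-lemma bound. Your step size $\eta<1/L$ does suffice. Since $\|u^*\|\le\|g_i\|$, the difference $2\sum_i g_i^\top(g_i+u^*)-\sum_i\|g_i+u^*\|^2$ equals $\sum_i(\|g_i\|^2-\|u^*\|^2)\ge0$. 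Hence $\Delta J\ge\eta(1-L\eta)\sum_i g_i^\top(g_i+u^*)$.
\item You fold in the existence, uniqueness and continuity of $u^*$, which the paper proves separately in the proof of Theorem~\ref{th:eq_p_ex}.
\item You expose the block-dimension ambiguity, which the paper never resolves.
\end{itemize}
One small correction: the update stalls only when every $g_i=0$. So $u^*=0$ alone does not give stationarity; for example, take $g_2=-g_1\neq0$.

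The paper then goes further than you do. From $\Delta J\ge0$ it asserts $Q^{\pi_{\text{old}}}(s,\pi_{\text{new}}(s))\ge V^{\pi_{\text{old}}}(s)$ for every $s$. It then telescopes to a pointwise bound $V^{\pi_{\text{new}}}\ge V^{\pi_{\text{old}}}$, which Appendix~\ref{ap:convergence_value_function} relies on.

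You lose nothing sound by stopping at $J$. By the performance-difference lemma, an increase in $J$ controls only a $d^{\pi_{\text{new}}}$-weighted average of the advantage, not its value at each state. A counterexample exists already for $N=1$, where $u^*=g_1$. Let one parameter be shared by two start states whose preferred actions are opposite; gradient ascent then raises $J$ while lowering the value of the rarer state.

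A pointwise value-improvement claim would need an extra ingredient, such as exact state-wise policy improvement or a TRPO-style surrogate bound with an explicit KL penalty. Gradient alignment alone does not give it.
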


\begin{proof}
The proof proceeds by analyzing the first-order approximation of the joint objective function update. We first utilize the geometric properties of the minimum-norm consensus vector $u^*$ within the convex hull to guarantee a strictly non-negative improvement in the joint objective $J(\theta)$ via Taylor Expansion. Subsequently, by invoking the Policy Improvement Theorem and performing a standard recursive telescoping expansion, we establish that this local objective ascent translates into a uniform global improvement of the value function $V(s)$. The detailed derivation is provided in Appendix~\ref{ap:value_convergence}.
\end{proof}
The convergence of the value function within the GRASP framework is provided in~\cref{ap:convergence_value_function}.

\begin{figure*}[ht!]
    \centering
    \includegraphics[width=0.9\linewidth]{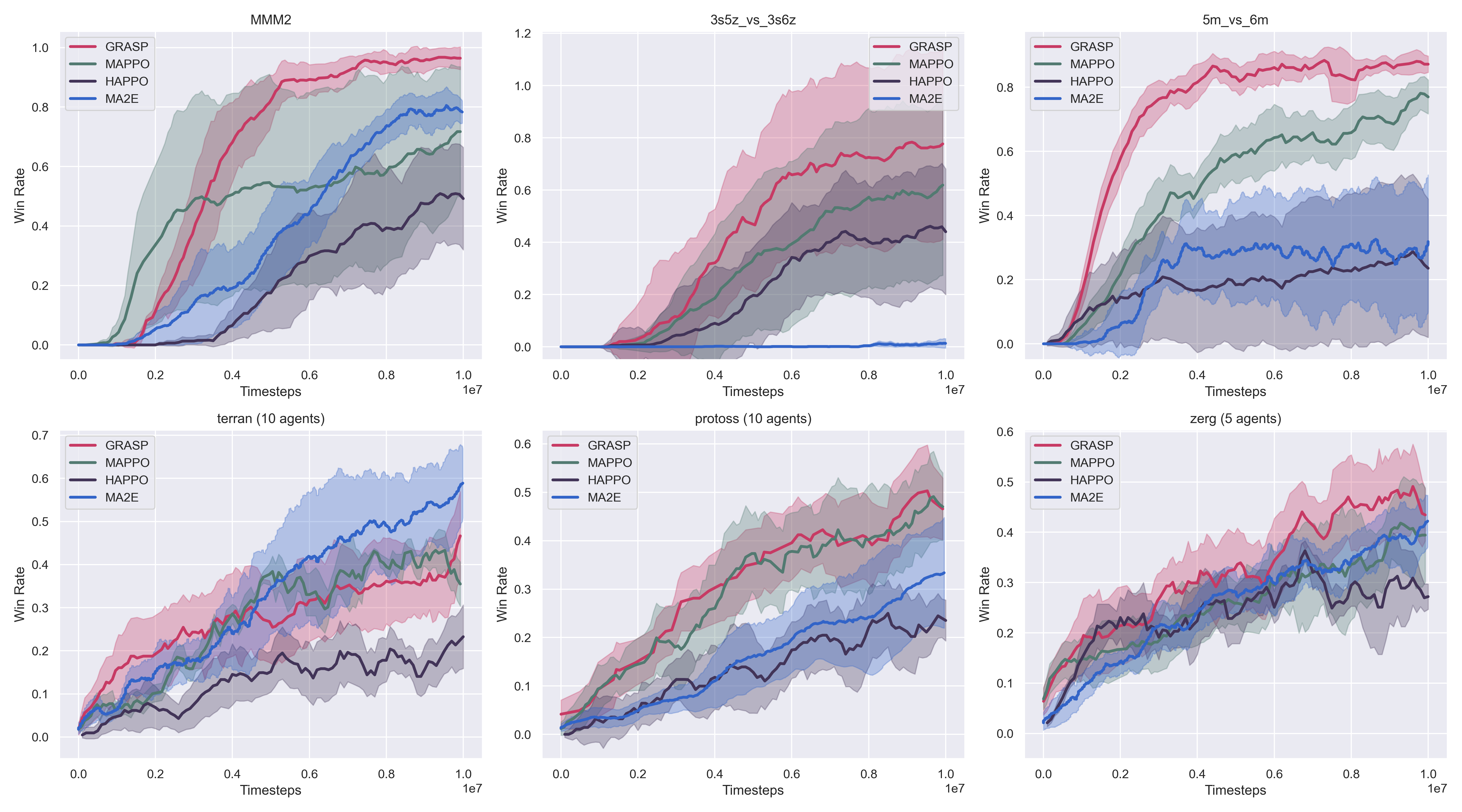}
    \caption{Comparison of win rate learning curves between GRASP and baseline algorithms (MAPPO, HAPPO, and $\text{MA}^2\text{E}$) across six scenarios in SMAC and SMACv2. Solid lines represent the average win rate from multiple runs, while shaded areas indicate standard deviation.}
    \label{fig:smac_res}
\end{figure*}

\begin{table*}[h!]
\centering
\caption{Win rate comparison (Mean $\pm$ Std) on different scenarios. The best results are highlighted in \textbf{bold}.}
\label{tab:win_rate_comparison}
\small
\setlength{\tabcolsep}{8pt}

\begin{tabular}{lcccc}
\toprule
\textbf{Scenario} & \textbf{GRASP (Ours)} & \textbf{MAPPO} & \textbf{HAPPO} & \textbf{$\text{MA}^2\text{E}$} \\ 
\midrule
MMM2 & \textbf{0.967 $\pm$ 0.031} & 0.717 $\pm$ 0.218 & 0.509 $\pm$ 0.168 & 0.839 $\pm$ 0.021 \\
3s5z vs 3s6z & \textbf{0.783 $\pm$ 0.360} & 0.619 $\pm$ 0.345 & 0.461 $\pm$ 0.232 & 0.027 $\pm$ 0.010 \\
5m vs 6m & \textbf{0.909 $\pm$ 0.044} & 0.815 $\pm$ 0.075 & 0.350 $\pm$ 0.321 & 0.372 $\pm$ 0.114 \\
Terran (10 agents) & 0.467 $\pm$ 0.103 & 0.571 $\pm$ 0.000 & 0.233 $\pm$ 0.074 & \textbf{0.583 $\pm$ 0.057} \\
Protoss (10 agents) & \textbf{0.563 $\pm$ 0.045} & 0.491 $\pm$ 0.094 & 0.250 $\pm$ 0.069 & 0.391 $\pm$ 0.054 \\
Zerg (5 agents) & \textbf{0.491 $\pm$ 0.084} & 0.418 $\pm$ 0.083 & 0.363 $\pm$ 0.020 & 0.396 $\pm$ 0.065 \\ 
\bottomrule
\end{tabular}%

\end{table*}


\subsection{Construction and Proof of the Consensus Gradient Function}
Based on the above analysis, we propose solving the QP problem of minimizing the convex combination of the gradient norm~\cite{qp} to find $u^*$ as the operator $f$. Intuitively, this is geometrically equivalent to finding the vector closest to the origin within the convex hull formed by the gradients.

\begin{proposition}[QP Consensus Operator]
\label{the2:KKT}
    Define the function $f(\cdot)$ as solving the following optimization problem:
    $$u^* = \sum_{i=1}^N c_i^* g_i, \quad \text{where } \boldsymbol{c}^* = \arg\min_{\boldsymbol{c} \in \Delta^N} \frac{1}{2} \left\| \sum_{i=1}^N c_i g_i \right\|_2^2$$
\begin{equation}
\label{qp_opt}
     and~\text{s.t. } \sum_{i=1}^N c_i = 1, \quad c_i \ge 0.
\end{equation}
By the definition of the optimization problem, $u^*$ satisfies consensus representativeness and necessarily satisfies Pareto improvement, i.e., for any agent $j$, we have $g_j^\top u^* \ge \|u^*\|_2^2 \ge 0$.
\end{proposition}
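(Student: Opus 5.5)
The plan is to derive the claim from the first-order optimality condition of the minimum-norm problem over the convex hull. Consensus representativeness needs no argument: $u^*=\sum_i c_i^* g_i$ with $\boldsymbol{c}^*\in\Delta^N$ lies in $\operatorname{ConvexHull}(g_1,\dots,g_N)$ by construction. Existence of $\boldsymbol{c}^*$ follows because the objective is continuous and $\Delta^N$ is compact. The vector $u^*$ is unique even when $\boldsymbol{c}^*$ is not, because $u^*$ is the Euclidean projection of the origin onto the compact convex set $\mathcal{C}=\operatorname{ConvexHull}(g_1,\dots,g_N)$, and $\|\cdot\|^2$ is strictly convex in $u$.

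For the Pareto inequality, I would use the variational characterization of the projection. Since $u^*$ minimizes $h(u)=\tfrac12\|u\|^2$ over the convex set $\mathcal{C}$, we have $\langle \nabla h(u^*), v-u^*\rangle \ge 0$ for all $v\in\mathcal{C}$, that is, $\langle u^*, v\rangle \ge \|u^*\|_2^2$.

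To keep the argument self-contained, I would verify this directly. Take any $v\in\mathcal{C}$ and $t\in[0,1]$. The point $u^*+t(v-u^*)$ lies in $\mathcal{C}$, so
\[
\|u^*+t(v-u^*)\|^2-\|u^*\|^2 = 2t\langle u^*,v-u^*\rangle + t^2\|v-u^*\|^2 \ge 0 .
\]
Dividing by $t>0$ and letting $t\to 0^+$ gives the inequality. Specializing to $v=g_j$, which is a vertex of $\mathcal{C}$ (take $c=e_j$), yields $g_j^\top u^*\ge \|u^*\|_2^2\ge 0$ for every agent $j$.

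As an alternative route, I could write the KKT conditions of the QP in $\boldsymbol{c}$. Let $G=[g_1,\dots,g_N]$ and attach multiplier $\lambda$ to $\sum_i c_i=1$ and $\mu_i\ge0$ to $c_i\ge0$. Stationarity reads $g_j^\top u^* = \lambda + \mu_j$ with $\mu_j c_j^*=0$. Multiplying by $c_j^*$ and summing gives $\lambda=\|u^*\|^2$, hence $g_j^\top u^*=\|u^*\|^2+\mu_j\ge\|u^*\|^2$. Slater-type conditions hold trivially because the constraints are linear, so KKT is necessary. I would present this version, since it matches the paper's label \texttt{the2:KKT}.

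No step is a serious obstacle. The only point needing care is separating the possibly non-unique $\boldsymbol{c}^*$ from the unique $u^*$, so that the statement is about a well-defined $u^*$ (and hence $f$ is well-defined). I would also note the degenerate case: when $0\in\mathcal{C}$ we get $u^*=0$ and the inequality holds trivially with equality.
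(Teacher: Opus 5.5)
Your proposal is correct, and the KKT version you say you would present is the paper's proof: stationarity gives $g_j^\top u^* = \lambda + \mu_j$, weighting by $c_j^*$ and summing with complementary slackness gives $\lambda = \|u^*\|_2^2$, and dual feasibility $\mu_j \ge 0$ then yields $g_j^\top u^* \ge \|u^*\|_2^2 \ge 0$. Your projection/variational-inequality argument is a valid multiplier-free alternative, though it needs only $g_j \in \operatorname{ConvexHull}(g_1,\dots,g_N)$, since $g_j$ need not be a vertex; your remarks on existence and on the uniqueness of $u^*$ despite a possibly non-unique $\boldsymbol{c}^*$ fill in points the paper leaves implicit.
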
 
\begin{proof}
    This optimization problem is a convex quadratic programming problem with linear constraints. We can analyze the properties of its optimal solution using the Karush-Kuhn-Tucker (KKT)~\cite{convex} conditions. The detailed proof process is provided in Appendix~\ref{apd:kkt_proof}.
\end{proof}
Through the above proof, we confirm that $u^*$, the solution to the QP problem, possesses a non-negative projection onto the local gradient $g_j$ for all agents. This demonstrates that $u^*$ constitutes a strictly defined Pareto improvement direction. Furthermore, by Berge's Maximum Theorem, the unique solution to this convex QP problem is continuous with respect to gradients, thereby satisfying the continuity condition required for fixed-point analysis. Furthermore, the QP solver is an operator that satisfies Condition~\ref{lem:Properties}. The QP algorithm is described in~\cref{alg:qp_solver} in~\cref{ap:alg}.


\section{Experiment}

We evaluated GRASP in the StarCraft Multi-Agent Challenge (SMAC)~\cite{smac}, SMACv2~\cite{smacv2}, and Google Research Football (GRF)~\cite{grf} environments. SMAC is one of the most popular MARL benchmarks, covering a wide range of cooperative micro-control scenarios. We conducted experiments on SMAC HARD and SuperHARD scenarios. SMACv2 complements the deterministic nature of SMAC by randomizing starting positions and unit types, while also altering units' line of sight and attack ranges. In GRF, multiple agents collaborate to play a soccer match. All experiments were run within $1 \times 10^7$ time steps per trial, and we report the average win rate and shadow standard error across three distinct random seeds. The experimental parameters for comparison are provided in \cref{ap:settingup}.

\subsection{GRASP Collaborative MARL}




First, we integrate GRASP into the MARL framework by selecting MAPPO~\cite{mappo}, the most representative SOTA method among on-policy gradient methods, to form the novel GRASP-MAPPO method. We compare GRASP against multiple MARL algorithms, including MAPPO, HAPPO~\cite{happo}, and $\text{MA}^2\text{E}$~\cite{ma2e}. MAPPO and HAPPO are on-policy methods, while $\text{MA}^2\text{E}$ is an off-policy method. We validated our approach on the most challenging scenarios of SMAC: 5m\_vs\_6m, MMM2, and 3s5z\_vs\_3s6z. Despite 5m\_vs\_6m being a notoriously difficult map, most existing algorithms still fail to achieve high win rates. We constructed three scenarios within SMACv2 to assess GRASP's performance, adopting SMACv2 to overcome limitations SMAC's limitations, such as a lack of randomness. Unlike SMAC, SMACv2 incorporates randomly generated unit starting positions, introducing greater randomness to create challenging scenarios.

Fig.~\ref{fig:smac_res} and Tab.~\ref{tab:win_rate_comparison} illustrate the performance comparison between the GRASP model and baselines, with the GRASP-MAPPO method representing our proposed approach. In the HARD and SuperHARD scenarios of SMAC, GRASP achieves higher sampling efficiency and upper bounds on win rates compared to all baselines, demonstrating a significant improvement over MAPPO. In SMACv2, although GRASP underperforms $\text{MA}^2\text{E}$ on the 10gen\_terran scenario, it outperforms $\text{MA}^2\text{E}$ across all other scenarios. Crucially, GRASP consistently outperforms MAPPO because it is a framework adaptable to all other CTDE-based MARL methods. The key strength of GRASP lies in its ability to enhance baseline methods.

\subsection{Extensibility of GRASP}

\begin{figure*}[ht!]
    \centering
    \includegraphics[width=0.9\linewidth]{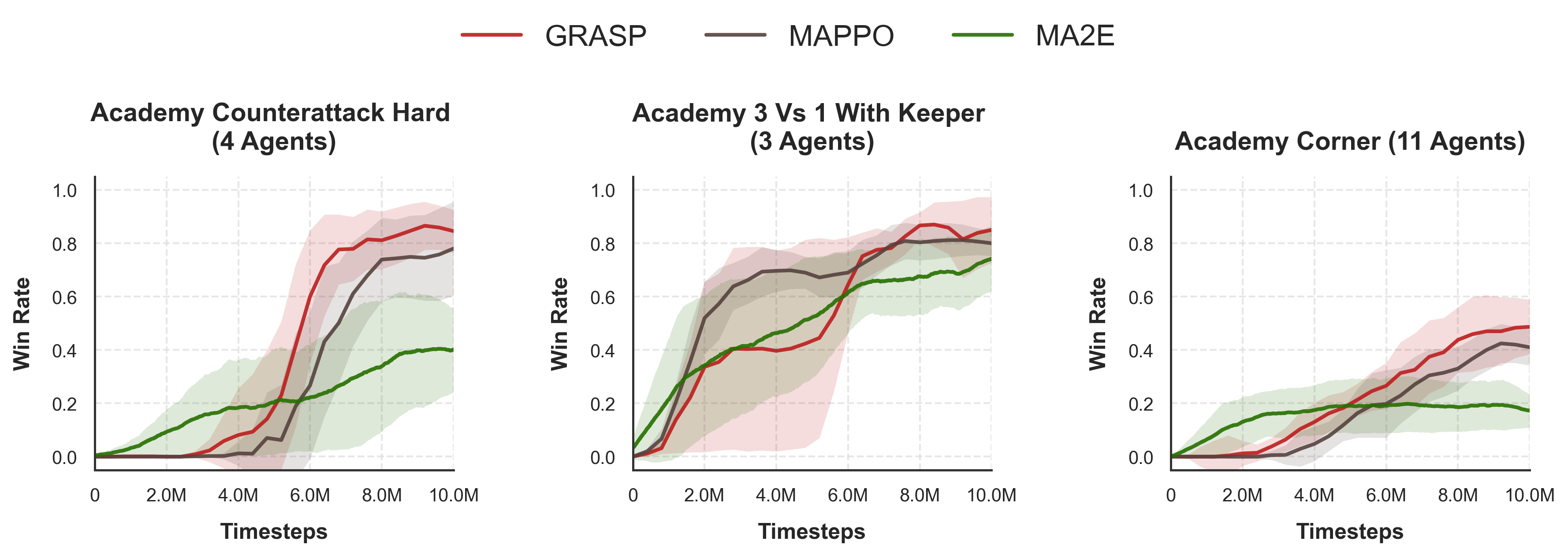}
    \caption{Comparison of win rate learning curves between GRASP and baseline algorithms (MAPPO and $\text{MA}^2\text{E}$) across three scenarios in GRF. Solid lines represent the average win rate from multiple runs, while shaded areas indicate standard deviation.}
    \label{fig:football _res}
\end{figure*}
\begin{figure}[ht!]
    \centering
    \includegraphics[width=0.65\linewidth]{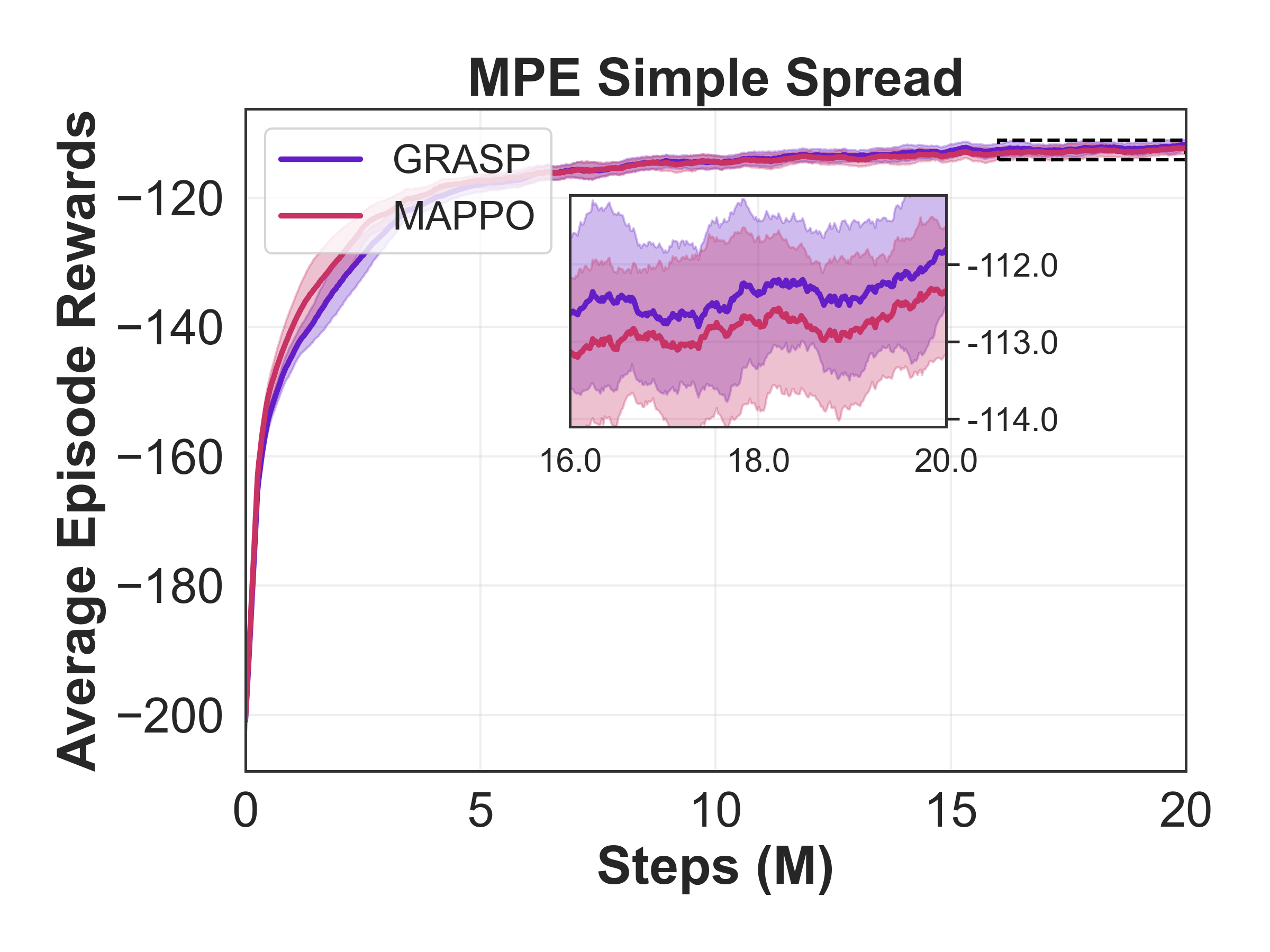}
    \caption{Comparison of win rate learning curves between GRASP and MAPPO in the Simple Spread scenario of MPE. Solid lines represent the average win rate from multiple runs, while shaded areas indicate standard deviation.}
    \label{fig:grasp_mpe_res}
\end{figure}

   
   
To comprehensively validate the applicability and robustness of the GRASP framework in complex dynamic environments, we conducted extensive experiments on the GRF platform. GRF is a physics-based, high-fidelity multi-agent simulation environment where the core winning condition is scoring a goal in the net of the opponent. Unlike traditional maze or simple particle environments, GRF features a high-dimensional state space, extremely sparse reward signals, and highly random opponent policies. This demands that agents not only possess precise individual control capabilities but also achieve high policy consistency to execute intricate passing and cutting coordination, making it the ideal testing ground for GRASP consensus gradients.

In selecting specific scenarios to cover collaborative challenges of varying difficulty and scale, we chose three representative official scenarios: Academy Counterattack Hard, Academy 3 Vs 1 With Keeper, and Academy Corner. The characteristics of these three scenarios are shown in Appendix~\ref{ap:settingup} (Tab.~\ref{tab:grf_scenarios}).

Fig.~\ref{fig:football _res} shows the win rate comparison curves of GRASP against the current SOTA method MAPPO and the exploration-focused $\text{MA}^2\text{E}$ method. Experimental results demonstrate that, thanks to the gradient realignment mechanism enabled by active perception, GRASP exhibits faster convergence and higher asymptotic win rates across all scenarios. When baseline methods stabilize and encounter suboptimal policy bottlenecks, GRASP effectively breaks through this upper bound, achieving significantly higher asymptotic win rates. This proves its capability to discover superior cooperative joint policy.

Furthermore, to validate the universality of GRASP across different collaborative paradigms, we conducted additional experiments in the Multi-Agent Particle Environments (MPE)~\cite{ctde}  setting. Unlike the emphasis of GRF on unified objectives and sparse rewards, MPE scenarios employ an additive reward mechanism, where the total team reward is the sum of individual rewards earned by each agent. This mechanism requires agents to strictly avoid physical collisions or mutual interference while pursuing individual optimization, thereby imposing stringent demands on the conflict resolution capabilities of joint policy.

Fig.~\ref{fig:grasp_mpe_res} presents the comparison results between GRASP and MAPPO in the Simple Spread scenario. Given the wide reward value range in the early training stages of this scenario, a local zoom-in view is provided to more clearly illustrate performance differences during the convergence phase. Results show that, at the same number of training steps, GRASP converges to a superior policy solution compared to MAPPO. This further validates the advantage of GRASP in handling multi-agent policy conflicts, effectively guiding the team toward more efficient, conflict-free collaboration.

We provide additional validations in Appendix~\ref{ap:MPE_2730}. Specifically, we challenge GRASP in the super-hard 27m\_vs\_30m scenario of SMAC to demonstrate its scalability and robustness in large-scale, high-dimensional multi-agent coordination tasks. Finally, Appendix~\ref{ap:qmix_grasp} provides a detailed discussion regarding the limitations of GRASP, particularly its reduced consensus efficacy when applied to off-policy methods (e.g., QMIX) due to stale transitions.

\section{Conclusion}
In this paper, we propose the GRASP framework, which incorporates a novel policy update method and a property definition for the consensus gradient operator. Theoretically, we demonstrate that the proposed policy update rule, driven by consensus gradients satisfying these properties, ensures the policy evolution process meets the conditions of Kakutani Fixed Point Theorem. This theoretically guarantees convergence toward the generalized Bellman Equilibrium. Extensive experiments on SMAC, SMACv2, GRF, and MPE consistently show that GRASP achieves state-of-the-art performance with superior scalability and generality. In highly stochastic environments, increased gradient variance expands the convex hull, which diminishes the magnitude of the consensus update. While this safely prevents drastic policy fluctuations, it attenuates the overall consensus gains. Future work will explore hybrid consensus operators combining representations and gradients for robust estimation, maintaining efficacy under extreme stochasticity.

\section{Impact Statement}

This paper presents work whose goal is to advance the field of Machine
Learning. There are many potential societal consequences of our work, none
which we feel must be specifically highlighted here.


\bibliography{example_paper}
\bibliographystyle{icml2026}

\newpage
\appendix
\onecolumn

\crefalias{section}{appendix}
\crefalias{subsection}{appendix}

\section{Additional Experiment}
\subsection{Transferability of GRASP}
\label{ap:qmix_grasp}

\begin{figure*}[ht]
    \centering
    \includegraphics[width=1\linewidth]{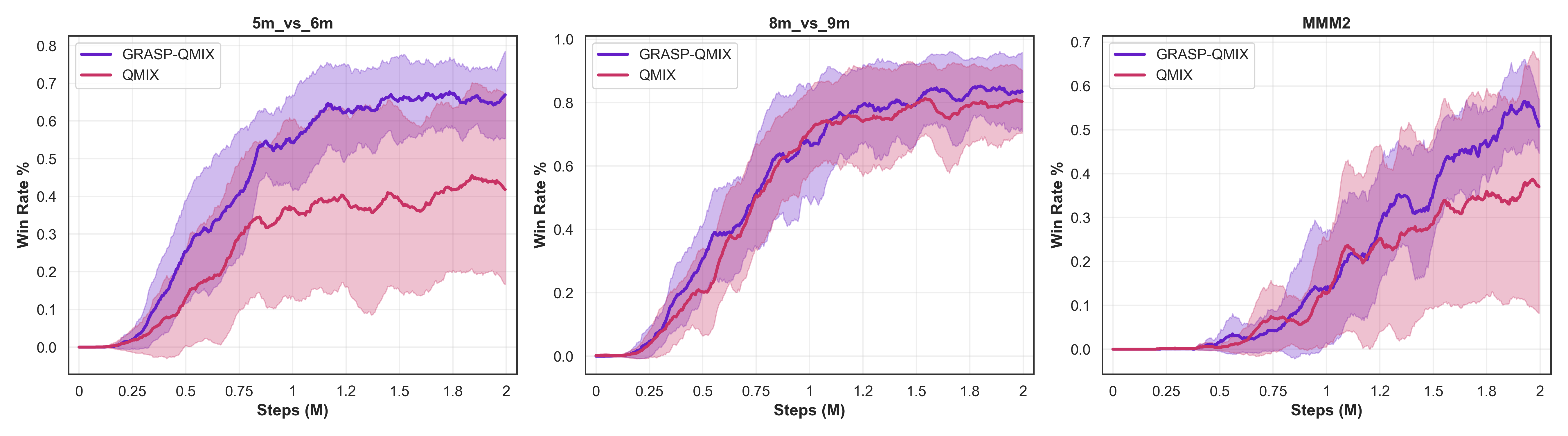}
    \caption{Comparison of win rate learning curves between GRASP-QMIX and QMIX across three scenarios in SMAC. Solid lines represent the average win rate from multiple runs, while shaded areas indicate standard deviation.}
    \label{fig:grasp_qmix_res}
\end{figure*}

To validate the portability of the GRASP framework, we ported our method to other approaches. Although GRASP aims to find the consensus gradient—the most suitable for on-policy methods—because according to~\ref{the1:u_for_pi}, the $g_i$ used to compute $u^*$ should be the gradient of the policy at the current time step. However, in off-policy methods, the buffer contains excessive traces from previous policies, causing the computed consensus gradient to be influenced by past policies and thus not the most precise. In this case, $u$ degenerates into a suboptimal consensus gradient. However, even though the trajectories originate from previous policies, the current policy is still utilized when computing the TD error. Thus, the suboptimal consensus gradient can still effectively optimize off-policy methods. Consequently, we combined GRASP with several typical off-policy methods, QMIX, to demonstrate the validity of this inference. 

Fig.~\ref{fig:grasp_qmix_res} demonstrates the results of GRASP combined with QMIX. As inferred from the preceding analysis, GRASP continues to enhance the effectiveness of QMIX, although the improvement is less pronounced than that observed when GRASP enhances MAPPO.

\subsection{Additional  Extensibility of GRASP}
\label{ap:MPE_2730}

\begin{figure}[ht!]
    \centering
    \includegraphics[width=0.4\linewidth]{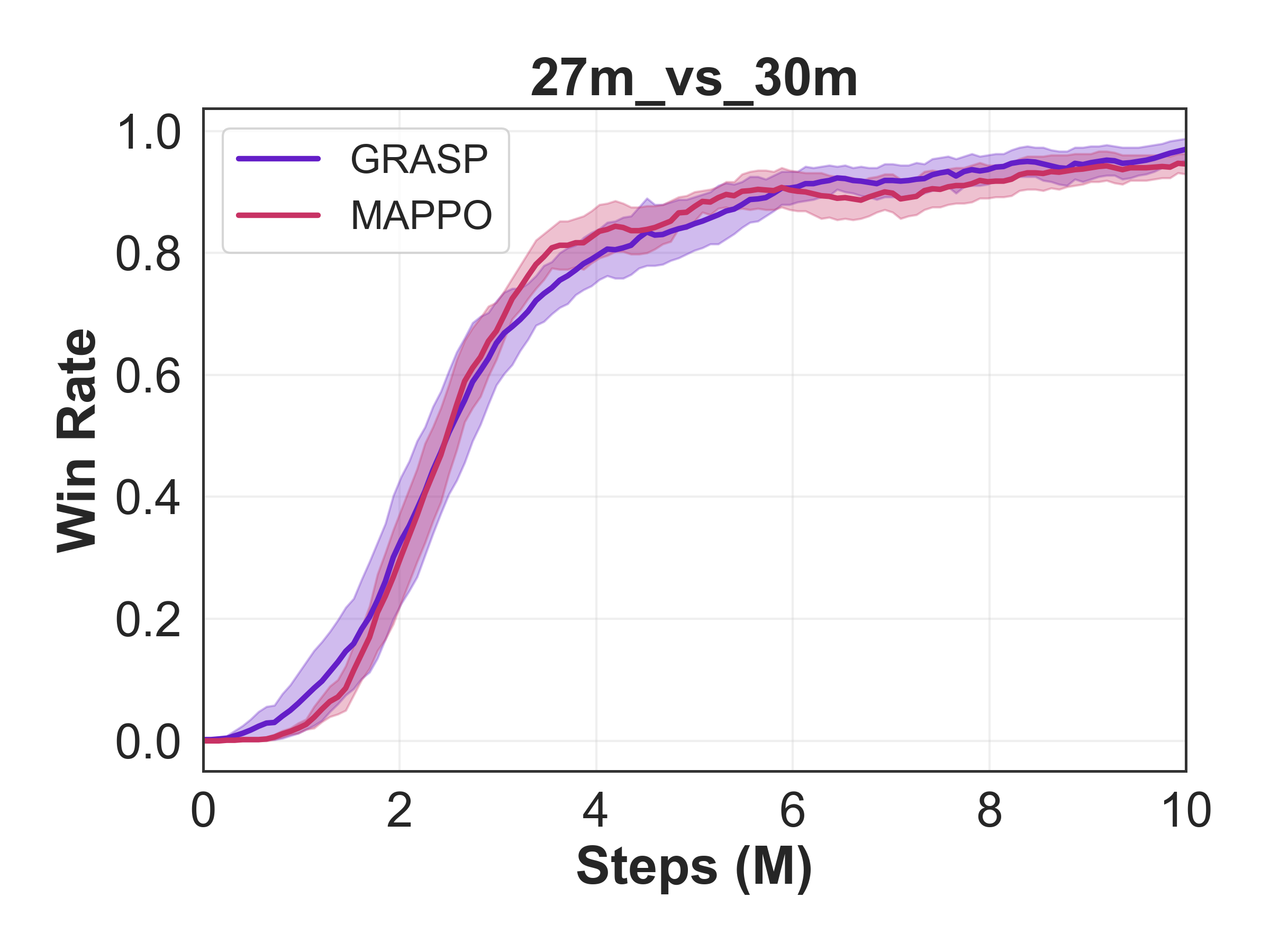}
    \caption{Comparison of win rate learning curves between GRASP and MAPPO in 27m\_vs\_30m scenario of MPE. Solid lines represent the average win rate from multiple runs, while shaded areas indicate standard deviation.}
    \label{fig:grasp_2730}
\end{figure}

To validate the scalability of the GRASP framework in large-scale agent collaboration tasks, we selected the Super Hard difficulty scenario 27m\_vs\_30m from the SMAC benchmark for experimentation. As the number of agents surges, the non-stationarity of the environment intensifies significantly, and coordinating conflicting policies among agents becomes increasingly challenging. Therefore, this scenario not only tests algorithmic performance but also pushes GRASP to its limits in extracting consensus within an extremely high-dimensional state space.

Fig.~\ref{fig:grasp_2730} presents the experimental results. In this highly challenging Super Hard scenario, MAPPO has demonstrated exceptional performance with a success rate of 94.68\%. Nevertheless, GRASP achieved a significant performance breakthrough, ultimately raising the success rate to 96.97\%. Notably, this improvement is far from negligible in the demanding 27m\_vs\_30m scenario. It signifies that GRASP reduces the task failure rate by nearly 43\% relative to the SOTA baseline. This demonstrates that the consensus mechanism of GRASP effectively corrects policy deviations, causing minor errors during large-scale agent battles.

\section{Algorithm}
\label{ap:alg}

\begin{algorithm}[ht!]
   \caption{GRASP (Shared Reward / Centralized Critic)}
   \label{alg:grasp}
\begin{algorithmic}[1]
   \STATE {\bfseries Input:} Initial policy parameters $\theta_i$ for agents $i \in \{1,\dots,N\}$, global value parameters $\phi$
   \STATE {\bfseries Hyperparameters:} Learning rate $\eta$, Clip $\epsilon$, GAE $\lambda$, Discount $\gamma$
   \STATE Initialize $\theta_{i, \text{old}} \leftarrow \theta_i$ for all $i$
   
   \FOR{iteration $k=1, 2, \dots$}
      \STATE \textit{// Phase 1: Data Collection \& Global GAE}
      \STATE Collect trajectory batch $\mathcal{D}$ by running $\pi_{\text{old}}$ for $T$ timesteps
      \STATE \textbf{Compute global value targets $\hat{R}_t$ and advantages $\hat{A}_t$ using GAE$(\gamma, \lambda)$ based on shared reward $r_t$ and $V(s_t; \phi)$}

      \STATE \textit{// Phase 2: Consensus Gradient Determination}
      \FOR{each agent $i \in \{1,\dots,N\}$}
         \STATE Compute local gradient on batch $\mathcal{D}$ using global $\hat{A}_t$:
         \STATE $g_i \leftarrow \mathbb{E}_{\mathcal{D}} \left[ \nabla_{\theta_i} \log \pi_i(a_{i,t}|o_{i,t}; \theta_{i,\text{old}}) \cdot \hat{A}_t \right]$
      \ENDFOR
      \STATE Solve QP for Pareto-optimal consensus direction:
      \STATE $u^* \leftarrow \operatorname*{argmin}_{u \in \operatorname{ConvexHull}(\{g_1, \dots, g_N\})} \|u\|^2$

      \STATE \textit{// Phase 3: Proximal Optimization Loop}
      \FOR{epoch $e=1$ to $K$}
         \STATE Shuffle $\mathcal{D}$ and partition into mini-batches
         \FOR{each mini-batch $\mathcal{B} \subset \mathcal{D}$}
            
            \STATE \textit{// 1. Global Critic Update}
            \STATE $\mathcal{L}(\phi) \leftarrow \frac{1}{|\mathcal{B}|} \sum_{t \in \mathcal{B}} (V(s_t; \phi) - \hat{R}_t)^2$
            \STATE $\phi \leftarrow \phi - \eta_{\text{critic}} \nabla_{\phi} \mathcal{L}(\phi)$

            \STATE \textit{// 2. Multi-Agent Actor Update}
            \FOR{each agent $i \in \{1,\dots,N\}$}
               \STATE Calculate ratio $\rho_t(\theta_i)$ and surrogate $J_i^{\text{PPO}}(\theta_i)$ using global $\hat{A}_t$
               \STATE Compute hybrid gradient: $\Delta \theta_i \leftarrow \nabla_{\theta_i} J_i^{\text{PPO}} + u^*$
               \STATE $\theta_i \leftarrow \theta_i + \eta \cdot \Delta \theta_i$
            \ENDFOR
         \ENDFOR
      \ENDFOR
      
      \STATE Update old policies: $\theta_{i, \text{old}} \leftarrow \theta_i$ for all $i$
   \ENDFOR
\end{algorithmic}
\end{algorithm}

\begin{algorithm}[h]
\caption{SolveConsensusQP: Solving Consensus Gradient via Quadratic Programming}
\label{alg:qp_solver}
\begin{algorithmic}[1]
\STATE \textbf{Input:} Set of local gradients $\mathcal{G} = \{g_1, g_2, \dots, g_N\}$
\STATE \textbf{Output:} Consensus gradient $u^*$, Mixing coefficients $\mathbf{c}^*$

\STATE \textbf{Construct QP Matrices:}
\STATE Let objective function be $f(\mathbf{c}) = \frac{1}{2} \| \sum_{i=1}^N c_i g_i \|_2^2 = \frac{1}{2} \mathbf{c}^\top \mathbf{P} \mathbf{c}$
\STATE Compute Gram Matrix $\mathbf{P} \in \mathbb{R}^{N \times N}$ where:
\STATE \quad $P_{ij} = g_i^\top g_j$ \COMMENT{Dot product of gradients}
\STATE Define Constraints:
\STATE \quad Equality: $\sum c_i = 1 \implies \mathbf{1}^\top \mathbf{c} = 1$
\STATE \quad Inequality: $c_i \ge 0 \implies \mathbf{I} \mathbf{c} \ge \mathbf{0}$

\STATE \textbf{Solve Standard QP:}
\STATE Find $\mathbf{c}^* = [c_1^*, \dots, c_N^*]^\top$ that minimizes:
\STATE \quad $\min_{\mathbf{c}} \frac{1}{2} \mathbf{c}^\top \mathbf{P} \mathbf{c}$
\STATE \quad s.t. $\mathbf{A}_{eq} \mathbf{c} = \mathbf{b}_{eq}$ \quad (where $\mathbf{A}_{eq} = \mathbf{1}^\top, \mathbf{b}_{eq} = 1$)
\STATE \quad \quad \ $\mathbf{A}_{in} \mathbf{c} \ge \mathbf{b}_{in}$ \quad (where $\mathbf{A}_{in} = \mathbf{I}, \mathbf{b}_{in} = \mathbf{0}$)

\STATE \textbf{Compute Consensus Gradient:}
\STATE $u^* \leftarrow \sum_{i=1}^N c_i^* g_i$

\STATE \textbf{return} $u^*, \mathbf{c}^*$
\end{algorithmic}
\end{algorithm}
\newpage

\section{Proof of the Theorem~\ref{th:eq_p_ex}}
\label{ap:Kakutani Fixed-Point Theorem}
To apply the Kakutani Fixed-Point Theorem, we define the GRASP update operator as:
\begin{equation}
    T_{\text{GRASP}}(\theta) = \left\{ \theta + \eta \left( g(\theta) +  u^*(\theta) \right) \right\}.
\end{equation}
Specifically, $u^*$ is computed by any operator satisfying the conditions in Condition~\ref{lem:Properties}. While these general properties suffice for an existence proof, to ensure theoretical consistency with our practical implementation, we employ the specific operator instantiation used in our algorithm for the derivation. The detailed definition and property verification of this operator are provided in Proposition~\ref{the2:KKT}. Therefore, $u^*(\theta) = \operatorname*{argmin}_{u \in \mathcal{H}(\theta)} \|u\|^2$, and $\mathcal{H}(\theta)$ denotes the convex hull of local gradients. We verify that $T_{\text{GRASP}}$ satisfies the three necessary conditions: (1) Non-emptiness, (2) Convexity, and (3) Closed Graph property.

1. Non-emptiness, we first establish that the consensus gradient $u^*$ always exists. The local gradients are defined as $g_i(\theta) = \mathbb{E} [ \nabla_{\theta_i} \log \pi_i \cdot A_i^{\boldsymbol{\pi}_{\theta}} ]$. The advantage $A_i^{\boldsymbol{\pi}_{\theta}}$ depends on the state value function $V^{\boldsymbol{\pi}_{\theta}}$, which is updated by minimizing the TD error:
\begin{equation}
    \delta_i(t) = r_i(t) + \gamma V_i(s_{t+1}; \phi) - V_i(s_t; \phi).
\end{equation}
Since the discount factor satisfies $\gamma \in [0, 1)$ and rewards are bounded ($|r_i| \le R_{\max}$), the Bellman operator acts as a contraction mapping. This guarantees that the value estimates are strictly bounded:
\begin{equation}
    |V^{\boldsymbol{\pi}_{\theta}}(s)| \le \frac{R_{\max}}{1 - \gamma} < \infty.
\end{equation}
Consequently, the local gradients $g_i(\theta)$ are finite for all $\theta \in \Theta$. The convex hull $\mathcal{H}(\theta)$ generated by these finite gradients is a compact set. By the Weierstrass Theorem, the continuous objective function $\|u\|^2$ achieves a minimum over this compact set, ensuring $u^*$ exists and $T_{\text{GRASP}}(\theta) \neq \emptyset$.

2. Convexity, we next examine the convexity of the image of the operator. The consensus gradient is the solution to the QP problem:
\begin{equation}
    \min_{u} \|u\|^2 \quad \text{s.t.} \quad u \in \mathcal{H}(\theta).
\end{equation}
Since the squared Euclidean norm $f(u) = \|u\|^2$ is a strictly convex function, and the constraint set $\mathcal{H}(\theta)$ is convex, the minimizer $u^*$ is unique. A set containing a single element (a singleton) is trivially convex. Thus, for any $\theta$, the set $T_{\text{GRASP}}(\theta)$ is convex.

3. Closed Graph, finally, we demonstrate that the operator has a closed graph. Given that $T_{\text{GRASP}}$ is single-valued, this is equivalent to proving continuity. Continuity of Constraints: Since $\pi_{\theta}$ is continuously differentiable, the mapping from parameters $\theta$ to the set of local gradients $\{g_i(\theta)\}$ is continuous. Consequently, the correspondence $\theta \mapsto \mathcal{H}(\theta)$ is continuous. Berge's Maximum Theorem: For the optimization problem $\min \|u\|^2$ s.t. $u \in \mathcal{H}(\theta)$, since the objective is continuous and the constraint correspondence is continuous, the theorem states that the solution mapping is upper hemi-continuous. Continuity of Solution: Coupled with the uniqueness of $u^*$, upper hemi-continuity implies that $u^*(\theta)$ is a continuous function of $\theta$. Since the update rule is a linear combination of continuous functions:
\begin{equation}
    \theta' = \theta + \eta (g(\theta) +  u^*(\theta)).
\end{equation}
In summary, the operator $T_{\text{GRASP}}$ is continuous.

The operator $T_{\text{GRASP}}$ satisfies all conditions of the Kakutani Fixed-Point Theorem. Therefore, there exists a fixed point $\theta^* \in \Theta$ such that $\theta^* = T_{\text{GRASP}}(\theta^*)$, proving the reachability of the generalized Bellman equilibrium.

Since $\theta$ is nonempty, compact, and convex, and $\Phi$ satisfies the required conditions, Kakutani Fixed-Point Theorem guarantees the existence of a fixed point $\theta^* \in \theta$ such that $\theta^* \in \Phi(\theta^*)$. By the definition of $\Phi$, this implies:
\begin{equation}
    \theta^* = \theta^* + \eta v(\theta^*) \implies v(\theta^*) = \boldsymbol{0}.
\end{equation}
This implies that for all $i$, $v_i(\theta^*) = g_i(\theta^*) + u^*(\theta^*) = \boldsymbol{0}$, i.e., $g_i(\theta^*) = -u^*(\theta^*)$. Substituting this relation into the definition of $u^*$ yields: $u^*(\theta^*) = \sum c_i^* (- u^*(\theta^*)) = -(\sum c_i^*) u^*(\theta^*) = -u^*(\theta^*)$, which holds only when $u^*(\theta^*) = \boldsymbol{0}$. Therefore, this fixed point is a \textbf{consensus equilibrium point} with zero consensus gradient.

\section{Proof of Value Function Convergence and Monotonic Improvement}
\label{ap:value_convergence}

In this section, we provide the detailed proof for the monotonic policy improvement and value function convergence under the GRASP framework, as outlined in~\cref{method:value_function_proof}.

\textbf{Objective Improvement via Taylor Expansion and Convex Geometry}: We first analyze the change in the objective function $J(\theta)$ induced by the parameter update. According to the definition of the First-Order Taylor Expansion, for a smooth function $J(\theta)$, the value at the updated parameter $\theta_{\text{new}} = \theta_{\text{old}} + \Delta \theta$ can be approximated as:
\begin{equation}
        J(\theta_{\text{new}}) =  J(\theta_{\text{old}}) + 
         \sum_{i=1}^N \langle \nabla_{\theta_i} J(\theta), \Delta \theta_i \rangle + o(\|\Delta \theta\|^2).
\end{equation}
Substituting the GRASP update rule $\Delta \theta_i = \eta (g_i +  u^*)$ into the linear term, the increment $\Delta J$ is given by:
\begin{equation}
\Delta J \approx \sum_{i=1}^N \langle g_i, \eta (g_i +  u^*) \rangle 
= \eta \sum_{i=1}^N \|g_i\|^2 + \eta  \sum_{i=1}^N \langle g_i, u^* \rangle.
\end{equation}
To analyze the sign of the second term, we invoke the fundamental property of the minimum-norm point in a convex hull, as established in the Frank-Wolfe algorithm and Multi-Gradient Descent literature~\cite{convex_desideri2012multiple,convex_frank1956algorithm}.

Since $\eta > 0$, the GRASP update guarantees a strictly non-negative improvement in the joint objective, surpassing the gain of standard independent updates by the margin of the consensus term $\eta  N \|u^*\|^2$.

\textbf{Value Function Convergence via Policy Improvement Theorem:} Having established that the joint objective (expected return) increases, we now demonstrate that this leads to a uniform improvement in the value function $V(s)$. We rely on the Policy Improvement Theorem~\cite{bertsekas2019reinforcement,sutton1998reinforcement}.

Since the update direction aligns with the ascent direction of the objective $J(\theta)$, the new policy $\pi_{\text{new}}$ improves the expected action value over the old policy $\pi_{\text{old}}$ locally:
\begin{equation}
    Q^{\pi_{\text{old}}}(s, \pi_{\text{new}}(s)) \ge V^{\pi_{\text{old}}}(s), \quad \forall s.
\end{equation}
To prove that this local improvement translates to the global value function $V^{\pi_{\text{new}}}(s)$, we perform the standard recursive expansion (telescoping argument). Starting from the definition of the value function:
\begin{equation}
\begin{aligned}
V^{\pi_{\text{old}}}(s) &\le Q^{\pi_{\text{old}}}(s, \pi_{\text{new}}(s)) \\
&= \mathbb{E}_{a \sim \pi_{\text{new}}, s' \sim \mathcal{P}} \left[ r(s, a) + \gamma V^{\pi_{\text{old}}}(s') \right] \\
&\le \mathbb{E}_{a \sim \pi_{\text{new}}, s' \sim \mathcal{P}} \left[ r(s, a) + \gamma Q^{\pi_{\text{old}}}(s', \pi_{\text{new}}(s')) \right] \quad  \\
&= \mathbb{E}_{\tau \sim \pi_{\text{new}}} \left[ r_t + \gamma r_{t+1} + \gamma^2 V^{\pi_{\text{old}}}(s_{t+2}) \right] \\
&\le \mathbb{E}_{\tau \sim \pi_{\text{new}}} \left[ \sum_{k=0}^{\infty} \gamma^k r_{t+k} \right] \\
&= V^{\pi_{\text{new}}}(s).
\end{aligned}
\end{equation}
The derivation confirms that $V^{\pi_{\text{new}}}(s) \ge V^{\pi_{\text{old}}}(s)$ for all $s$. Consequently, the standard Critic, which minimizes the Temporal Difference (TD) error against the returns generated by $\pi_{\text{new}}$, will naturally converge to the higher value estimates $V^{\pi_{\text{new}}}(s)$.

\section{Convergence of Value Function Estimates under GRASP}
\label{ap:convergence_value_function}

\begin{theorem}
     Let $\mathcal{T}^{\pi}$ be the standard Bellman evaluation operator for a policy $\pi$. Under the GRASP framework, the Critic update converges to a unique fixed point $V^{\pi_{\text{new}}}$, and the sequence of value functions generated by iterative GRASP updates satisfies the monotonicity property: $V^{\pi_{k+1}}(s) \ge V^{\pi_{k}}(s)$.
\end{theorem}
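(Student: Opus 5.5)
The proof has two parts: a contraction argument for the critic, and a policy-improvement argument for monotonicity. Throughout we assume, as in Appendix~\ref{ap:Kakutani Fixed-Point Theorem}, that $\gamma\in[0,1)$, rewards are bounded by $R_{\max}$, and value functions live in the Banach space $(\mathcal{B}(\mathcal{S}),\|\cdot\|_\infty)$ of bounded functions.

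\textbf{Part 1: convergence of the critic.} Fix the policy $\pi_{\text{new}}$ produced by one GRASP actor step. Because $\pi_{\text{new}}$ is held fixed during evaluation, the evaluation operator is
\[
(\mathcal{T}^{\pi}V)(s)=\mathbb{E}_{a\sim\pi(\cdot|s),\,s'\sim P}\bigl[r(s,a)+\gamma V(s')\bigr].
\]
For any $V,V'$ the rewards cancel in the difference, and Jensen's inequality gives
\[
\|\mathcal{T}^{\pi}V-\mathcal{T}^{\pi}V'\|_\infty\le\gamma\|V-V'\|_\infty .
\]
The Banach fixed-point theorem then yields a unique fixed point. Unrolling the recursion identifies this fixed point as $V^{\pi_{\text{new}}}$, and it satisfies the bound $|V^{\pi_{\text{new}}}|\le R_{\max}/(1-\gamma)$ used earlier.

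We then model the critic update (TD minimization against GAE targets) as a stochastic approximation of this operator. Under standard Robbins--Monro step sizes and exact or tabular representation, we invoke the classical TD convergence results. This step genuinely needs an assumption about function approximation. We will state it explicitly: the value class is expressive enough, or the critic is trained to convergence in each iteration.

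\textbf{Part 2: monotonicity.} We reuse Appendix~\ref{ap:value_convergence}, which establishes $Q^{\pi_k}(s,\pi_{k+1}(s))\ge V^{\pi_k}(s)$ for all $s$, i.e.\ $\mathcal{T}^{\pi_{k+1}}V^{\pi_k}\ge V^{\pi_k}$. Since $\mathcal{T}^{\pi_{k+1}}$ is monotone ($V\le V'$ implies $\mathcal{T}V\le\mathcal{T}V'$), induction gives
\[
(\mathcal{T}^{\pi_{k+1}})^n V^{\pi_k}\ge V^{\pi_k}\quad\text{for every } n .
\]
Letting $n\to\infty$ and applying Part 1 gives $V^{\pi_{k+1}}=\lim_n(\mathcal{T}^{\pi_{k+1}})^nV^{\pi_k}\ge V^{\pi_k}$. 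This is a cleaner version of the telescoping argument already in the paper.

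To make the premise defensible, we would derive the per-state improvement from the first-order analysis. Proposition~\ref{the2:KKT} gives $g_j^\top u^*\ge\|u^*\|^2$, so $\Delta J\ge\eta\sum_i\|g_i\|^2+\eta N\|u^*\|^2\ge0$. Combined with the PPO clipping, which keeps $\pi_{k+1}$ in a trust region around $\pi_k$, we would then apply the performance-difference lemma to pass to the surrogate advantage $\mathbb{E}_{a\sim\pi_{k+1}}[A^{\pi_k}(s,a)]\ge0$.

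\textbf{Main obstacle.} The hard step is upgrading an improvement in the scalar objective $J$ to a statewise inequality $\mathbb{E}_{a\sim\pi_{k+1}}A^{\pi_k}(s,\cdot)\ge0$ for all $s$. A gain in $J$ holds only on average under the state distribution and only to first order in $\eta$. Our first attempt would be the TRPO-style lower bound
\[
J(\pi_{k+1})\ge L_{\pi_k}(\pi_{k+1})-C\,D_{KL}^{\max}.
\]
With this bound, monotonicity in $J$ follows for sufficiently small $\eta$. If the statewise version is out of reach, we would state the monotonicity in the weaker form $J(\pi_{k+1})\ge J(\pi_k)$, which is weighted by the initial distribution, or we would add an explicit assumption that the per-state surrogate advantage is nonnegative.
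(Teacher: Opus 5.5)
Your two parts reproduce the paper's proof. The paper also shows that $\mathcal{T}^{\pi}$ is a $\gamma$-contraction in $\|\cdot\|_\infty$, invokes Banach, and then passes from $\mathcal{T}^{\pi_{k+1}}V^{\pi_k}\ge V^{\pi_k}$ to $V^{\pi_{k+1}}=\lim_n(\mathcal{T}^{\pi_{k+1}})^nV^{\pi_k}\ge V^{\pi_k}$. Your induction via monotonicity of $\mathcal{T}^{\pi_{k+1}}$ makes explicit what the paper's ``by repeatedly applying the operator'' leaves implicit. Your caveat that a neural critic fitted to GAE targets needs a stochastic-approximation and representability assumption is one the paper skips; it simply identifies critic training with exact iteration of $\mathcal{T}^{\pi}$.

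The step you call the main obstacle is a genuine gap, and the paper has the same gap. It asserts without argument that $\Delta J\ge 0$ implies $Q^{\pi_k}(s,\pi_{k+1}(s))\ge V^{\pi_k}(s)$ for all $s$.

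\begin{itemize}
\item \textbf{Why the premise does not follow.} Write $\ell_s(\theta)=\mathbb{E}_{a\sim\pi_\theta(\cdot|s)}[A^{\pi_k}(s,a)]$. Then $\ell_s(\theta_k)=0$, and the premise is $\ell_s(\theta_{k+1})\ge 0$ for every $s$. By the policy-gradient theorem, $\nabla J(\theta_k)=\frac{1}{1-\gamma}\sum_s d^{\pi_k}(s)\nabla\ell_s(\theta_k)$. So the KKT inequality $g_j^\top u^*\ge\|u^*\|^2$ and the first-order bound on $\Delta J$ control only a $d^{\pi_k}$-weighted average of the directional derivatives $\langle\nabla\ell_s(\theta_k),\Delta\theta\rangle$, not their individual signs.

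\item \textbf{The statewise conclusion is false without further hypotheses.} Take $N=1$, so $u^*=g_1$ and the GRASP step is $2\eta g_1$. Use one-step episodes and two start states $s_1,s_2$ with probabilities $0.9$ and $0.1$ that the policy cannot distinguish (same observation, or a state-independent parameterization). Set rewards $r(s_1,a)=r(s_2,b)=1$ and $r(s_1,b)=r(s_2,a)=0$. Every ascent step raises $\pi(a)$ and increases $J$, but strictly lowers $V(s_2)=\pi(b)$. In the multi-agent setting, partial observability, the shared backbone, and the injection of other agents' gradient components into agent $i$'s block through $u^*=\sum_j c_jg_j$ can all produce the same effect.

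\item \textbf{Your proposed repair does not close it.} The TRPO and performance-difference arguments control $J(\pi_{k+1})-J(\pi_k)$ only through an expectation over states. At best they give $J(\pi_{k+1})\ge J(\pi_k)$ for small $\eta$. For that weaker claim, rely on smoothness of $J$ and a small step, not on PPO clipping. Clipping does not enforce a trust region here, because the gradient $u^*$ of the linear term $\langle\theta_i,u^*\rangle$ is never clipped.
\end{itemize}

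So one of your fallbacks is mandatory. Either assume $\mathbb{E}_{a\sim\pi_{k+1}}[A^{\pi_k}(s,a)]\ge 0$ for all $s$ as a hypothesis, or weaken the conclusion to monotonicity of $J$. With either, your argument goes through, which is more than the paper's proof achieves.
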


\begin{proof}
1. Contraction Property of the Standard Bellman Operator. First, we establish that for any fixed policy $\pi$ (including the GRASP-updated policy $\pi_{\text{new}}$), the Critic update is a contraction mapping. The standard Bellman operator $\mathcal{T}^{\pi}$ is defined as:
\begin{equation}
    (\mathcal{T}^{\pi} V)(s) = r(s, \pi(s)) + \gamma \sum_{s'} \mathcal{P}(s'|s, \pi(s)) V(s').
\end{equation}
For any two value functions $V$ and $V'$, and for any state $s$:
\begin{equation}
    \begin{aligned}
|(\mathcal{T}^{\pi} V)(s) - (\mathcal{T}^{\pi} V')(s)| &= \left| \gamma \sum_{s'} \mathcal{P}(s'|s, \pi(s)) (V(s') - V'(s')) \right| \\
&\le \gamma \sum_{s'} \mathcal{P}(s'|s, \pi(s)) \|V - V'\|_{\infty} \\
&= \gamma \|V - V'\|_{\infty}.
\end{aligned}
\end{equation}
Since the discount factor $\gamma \in [0, 1)$, $\mathcal{T}^{\pi}$ is a $\gamma$-contraction under the max-norm. By the Banach Fixed-Point Theorem, the Critic update iteratively converges to a unique fixed point $V^{\pi}$, which is the true value of policy $\pi$.

Shift of the Fixed Point via GRASP. The core contribution of GRASP is not modifying the operator $\mathcal{T}$, but shifting the fixed point to a superior position.Let $\pi_k$ be the policy at iteration $k$, and $\pi_{k+1}$ be the policy updated via GRASP: $\pi_{k+1} \leftarrow \pi_k + \eta(g +  u^*)$. From Proposition~\ref{def:monotomic_policy_improvent} (Monotonic Policy Improvement), we established that $\Delta J \ge 0$, which implies local improvement in the action-value function:
\begin{equation}
    Q^{\pi_k}(s, \pi_{k+1}(s)) \ge V^{\pi_k}(s).
\end{equation}
Consider the Bellman operator for the new policy, $\mathcal{T}^{\pi_{k+1}}$. Applying this operator to the old value function $V^{\pi_k}$:
\begin{equation}
    \begin{aligned}
(\mathcal{T}^{\pi_{k+1}} V^{\pi_k})(s) &= \mathbb{E}_{a \sim \pi_{k+1}} [r + \gamma V^{\pi_k}(s')] \\
&= Q^{\pi_k}(s, \pi_{k+1}(s)) \\
&\ge V^{\pi_k}(s) \ .
\end{aligned}
\end{equation}

This inequality $(\mathcal{T}^{\pi_{k+1}} V^{\pi_k}) \ge V^{\pi_k}$ is the condition for Monotonicity in Policy Iteration. By repeatedly applying the operator $\mathcal{T}^{\pi_{k+1}}$, the value estimates strictly increase (or stay constant at optimality) until converging to the new fixed point:
\begin{equation}
    V^{\pi_{k+1}} = \lim_{n \to \infty} (\mathcal{T}^{\pi_{k+1}})^n V^{\pi_k} \ge V^{\pi_k}.
\end{equation}

Conclusion: The GRASP framework ensures convergence in two steps: Actor Step: The gradient realignment guarantees a transition from $\pi_k$ to a strictly better policy $\pi_{k+1}$. Critic Step: The standard Bellman update, acting as a contraction operator, converges to the true value $V^{\pi_{k+1}}$. Due to the monotonic nature of the policy update, the sequence of fixed points is non-decreasing: $V^{\pi_0} \le V^{\pi_1} \le \dots \le V^*$, ensuring stable convergence toward the optimal cooperative equilibrium. $\hfill \blacksquare$

Remark on Implicit Convergence: Unlike methods that artificially alter the target of Critic (e.g., modifying rewards or adding regularization terms), GRASP relies on implicit synchronization. The Consensus Module actively realigns the Actor to produce higher-quality trajectories (higher returns $R$). The Critic, observing these improved returns, naturally adjusts its value estimates $V(s)$ upwards through the standard Bellman contraction. This separation of concerns, Actor handling the Direction of Improvement and Critic handling the Evaluation of Reality, ensures that the algorithm retains the stability guarantees of standard PPO while enjoying the accelerated convergence provided by consensus.

\end{proof}
\section{Proof that QP satisfies the properties of the Condition~\ref{lem:Properties}}
\label{apd:kkt_proof}
1) Constructing the Lagrange Function
    
    Let the objective function be the squared distance from the origin to the convex combination point. Introduce the Lagrange multiplier $\lambda$ (corresponding to the equality constraint $\sum c_i = 1$) and $\boldsymbol{\mu} = [\mu_1, \dots, \mu_N]^\top$ (corresponding to the inequality constraint $c_i \ge 0$):
\begin{equation}
    \mathcal{L}(\boldsymbol{c}, \lambda, \boldsymbol{\mu}) = \frac{1}{2} \left\| \sum_{i=1}^N c_i g_i \right\|_2^2 - \lambda \left(\sum_{i=1}^N c_i - 1\right) - \sum_{i=1}^N \mu_i c_i.
\end{equation}
2) Using the KKT conditions

At the optimal solution $\boldsymbol{c}^*$, the partial derivative with respect to each component $c_j$ must be zero:
\begin{equation}
    \frac{\partial \mathcal{L}}{\partial c_j} = \left(\sum_{i=1}^N c_i^* g_i\right)^\top g_j - \lambda - \mu_j = 0.
\end{equation}
Since $u^* = \sum_{i=1}^N c_i^* g_i$, the above equation can be rewritten as:
\begin{equation}
\label{eq:kkt_}
    u^{*T} g_j = \lambda + \mu_j \quad \forall j \in \{1, \dots, N\} .
\end{equation}
3) Solve for the dual variable $\lambda$

Multiply both sides of (Eq.~\ref{eq:kkt_}) by $c_j^*$ and sum over all $j$:
\begin{equation}
    \sum_{j=1}^N c_j^* (u^{*T} g_j) = \sum_{j=1}^N c_j^* \lambda + \sum_{j=1}^N c_j^* \mu_j.
\end{equation}
Left-hand side: $\sum c_j^* (u^{*T} g_j) = u^{*T} (\sum c_j^* g_j) = u^{*T} u^* = \|u^*\|_2^2$.

Right-hand side: Since $\sum c_j^* = 1$ (Proposition~\ref{the2:KKT}) and $c_j^* \mu_j = 0$ (complementary slackness), the right-hand side simplifies to $\lambda \cdot 1 + 0 = \lambda$.

Thus, we derive a crucial geometric property: the dual variable $\lambda$ precisely equals the square of the consensus gradient norm.
\begin{equation}
    \lambda = \|u^*\|_2^2.
\end{equation}
4) Verifying the Pareto Property

Substituting $\lambda = \|u^*\|_2^2$ back into (Eq.~\ref{eq:kkt_}):
\begin{equation}
    g_j^\top u^* = \|u^*\|_2^2 + \mu_j.
\end{equation}
By the dual feasibility constraint, the multiplier $\mu_j$ must be nonnegative ($\mu_j \ge 0$). Therefore:
\begin{equation}
    g_j^\top u^* \ge \|u^*\|_2^2.
\end{equation}
Since the square of a norm is always non-negative ($|u^*\|_2^2 \ge 0$), we ultimately prove that:
\begin{equation}
    g_j^\top u^* \ge 0.
\end{equation}
5) Generalized Bellman Equilibrium

When training approaches stability, if the system reaches a fixed point, according to Theorem~\ref {th:eq_p_ex}:
The update step becomes 0, i.e., $v_i(\theta^*) = 0$. This implies $g_i(\theta^*) = -u^*(\theta^*)$. Substituting this into the definition of $u^*$ yields: $u^* = \sum c_i^* (-u^*) = -(\sum c_i^*) u^* = -u^*$. The only possible case where this equation holds is $u^* = 0$*.

\section{Off-policy Methods within GRASP}
\label{ap:off-policy_grasp}
\subsection{Geometrically Aligned Update Rules}

To ensure stability in off-policy methods, we introduce a Geometrically Aligned Factor $\Gamma_i$ that imposes an intrinsic trust-region constraint directly on the update magnitude. This factor serves as the primary stability mechanism by strictly bounding policy shifts in single-step update paradigms. This ensures that the active gradient realignment does not violate the monotonicity assumption of the policy improvement theorem, effectively marrying the stability of trust region methods with the coordination efficiency of game-theoretic consensus. Instead of relying on heuristic step-size clipping, we reconstruct the proactive update vector $v_i$ using a geometrically-driven scaling mechanism defined as follows:
\begin{equation}
v_i = \Gamma_i (g_i + u^*),
\end{equation}
where $\Gamma_i$ is the \textbf{Geometrically Aligned Factor}. Grounded in trust-region principles, $\Gamma_i$ adaptively scales the update magnitude based on the resonance between individual and collective gradients:
\begin{equation}
\Gamma_i = \frac{\|u^*\|^2 + \langle g_i, u^* \rangle}{\|g_i\|^2 + \|u^*\|^2}.
\end{equation}

This formulation offers several theoretical advantages:
\begin{itemize}
    \item \textbf{Intrinsic Trust-Region Constraint}: The factor $\Gamma_i$ effectively acts as a dynamic damping term. As proven in Theorem~\ref{th:eq_p_ex}, $\langle g_i, u^* \rangle \ge 0$, which ensures $\Gamma_i \in [0, 1]$. This naturally keeps the update within a safe proximity of the current policy parameters.
    \item \textbf{Parameter-Free Adaptation}: Unlike traditional clipping methods that require a predefined threshold, the magnitude of $v_i$ is entirely determined by the relative geometry of the gradient space, enhancing the robustness of the optimization process.
\end{itemize}

\subsection{Proof of Stability and Consistency}

Next, we prove that under these derived constraints, the GRASP update is stable.

\begin{proposition}[Stability and Consistency]
For each agent $i$, the constrained GRASP update direction $\tilde{v}_i$ satisfies two key properties:
\begin{enumerate}
    \item \textbf{Individual Improvement: } $g_i^\top \tilde{v}_i \ge 0$. Ensure that no agent is harmed during the learning process. 
    \item \textbf{Collective Coherence: } $(u^*)^\top \tilde{v}_i \ge 0$. Ensure that individual updates never deviate from the consensus update direction of the team.
\end{enumerate}
\end{proposition}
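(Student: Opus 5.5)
I will take the constrained direction to be $\tilde{v}_i = \Gamma_i (g_i + u^*)$ with the Geometrically Aligned Factor $\Gamma_i$ defined above. The plan is to first show that $\Gamma_i \ge 0$. Once that holds, each required inequality reduces to a sign statement about the unscaled vector $g_i + u^*$.

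\textbf{Step 1: nonnegativity of $\Gamma_i$.} By Proposition~\ref{the2:KKT}, the QP consensus operator satisfies
\[
g_j^\top u^* \ge \|u^*\|_2^2 \ge 0 \quad \text{for every } j.
\]
This comes from the KKT identity $g_j^\top u^* = \|u^*\|_2^2 + \mu_j$ with $\mu_j \ge 0$, derived in Appendix~\ref{apd:kkt_proof}. Hence the numerator $\|u^*\|^2 + \langle g_i, u^*\rangle$ is nonnegative. The denominator $\|g_i\|^2 + \|u^*\|^2$ is also nonnegative.

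I will treat the degenerate case $g_i = u^* = \boldsymbol{0}$ separately. There $\tilde{v}_i = \boldsymbol{0}$ under any convention for $\Gamma_i$, for example $\Gamma_i := 0$, and both inequalities hold trivially. Otherwise $\Gamma_i \ge 0$ is well defined. The upper bound $\Gamma_i \le 1$ follows from Cauchy--Schwarz via $\langle g_i, u^* \rangle \le \tfrac12(\|g_i\|^2 + \|u^*\|^2)$, but it is not needed for the two claims.

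\textbf{Step 2: individual improvement.} I will expand
\[
g_i^\top \tilde{v}_i = \Gamma_i \bigl( \|g_i\|^2 + g_i^\top u^* \bigr).
\]
Both terms in the bracket are nonnegative, the second by the Pareto property of Condition~\ref{lem:Properties} as realized in Proposition~\ref{the2:KKT}. Multiplying by $\Gamma_i \ge 0$ gives $g_i^\top \tilde{v}_i \ge 0$.

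\textbf{Step 3: collective coherence.} Similarly,
\[
(u^*)^\top \tilde{v}_i = \Gamma_i \bigl( u^{*\top} g_i + \|u^*\|^2 \bigr) \ge 2\Gamma_i \|u^*\|^2 \ge 0,
\]
again using $g_i^\top u^* \ge \|u^*\|^2$.

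\textbf{Main obstacle.} The main subtlety is not algebraic but one of bookkeeping. In the paper's parameterization, $g_i$ lives in agent $i$'s parameter block while $u^*$ is a convex combination of all $g_j$. For the inner products $\langle g_i, u^*\rangle$ to make sense, I will adopt the convention already implicit in Proposition~\ref{the2:KKT} and Algorithm~\ref{alg:qp_solver}: all $g_j$ are embedded in a common space, for instance shared parameters or zero-padded joint vectors, so that the Gram matrix $P_{ij} = g_i^\top g_j$ is defined.

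Under that convention the KKT inequality applies verbatim. The only remaining care is the $0/0$ case of $\Gamma_i$, which is handled in Step 1.
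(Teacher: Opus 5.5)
Your proof is correct and follows the paper's argument: you get $\Gamma_i \ge 0$ from the KKT bound $g_i^\top u^* \ge \|u^*\|_2^2 \ge 0$ of Proposition~\ref{the2:KKT}, then write each inner product as $\Gamma_i$ times a sum of nonnegative terms, and your explicit handling of the $0/0$ case is slightly more careful than the paper's. One small aside: the AM--GM bound $\langle g_i, u^*\rangle \le \tfrac12(\|g_i\|^2+\|u^*\|^2)$ alone does not give your unused claim $\Gamma_i \le 1$; you also need $\|u^*\| \le \|g_i\|$, which does hold because $u^*$ is the minimum-norm point of a convex hull containing $g_i$.
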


\begin{proof}
\text{Property 1 (Individual Improvement):}
Substituting the resonance update rule $v_i = \Gamma_i (g_i + u^*)$ into the inner product with the individual gradient $g_i$:
\begin{equation}
\langle g_i, v_i \rangle = \Gamma_i \langle g_i, g_i + u^* \rangle = \Gamma_i (|g_i|^2 + \langle g_i, u^* \rangle).
\end{equation}
According to the Incentive Compatibility proven in Proposition~\ref{the2:KKT}, the consensus direction $u^*$ satisfies $\langle g_i, u^* \rangle \ge 0$ for all $i$. Furthermore, the Geometrically Aligned Factor $\Gamma_i$, derived from trust-region considerations, is defined as:
\begin{equation}
\Gamma_i = \frac{|u^*|^2 + < g_i, u^* >}{|g_i|^2 + |u^*|^2}.
\end{equation}
Since $|u^*|^2 \ge 0$ and $\langle g_i, u^* \rangle \ge 0$, the numerator of $\Gamma_i$ is non-negative, and the denominator is strictly positive (unless $g_i = u^* = 0$), ensuring $\Gamma_i \ge 0$. Consequently:
\begin{equation}
\langle g_i, v_i \rangle = \underbrace{\Gamma_i}_{\ge 0} (\underbrace{|g_i|^2}_{\ge 0} + \underbrace{\langle g_i, u^* \rangle}_{\ge 0}) \ge 0.
\end{equation}
This confirms that the update direction $v_i$ is a descent direction for the individual cost (or an ascent direction for the reward), maintaining the individual rationality within the trust region.

\textit{Property 2 (Collective Consistency):}
Similarly, for the consensus direction $u^*$:
\begin{equation}
\langle u^*, v_i \rangle = \Gamma_i \langle u^*, g_i + u^* \rangle = \Gamma_i (|u^*|^2 + \langle g_i, u^* \rangle).
\end{equation}
By the same non-negativity of $\Gamma_i$ and $\langle g_i, u^* \rangle$, we have:
\begin{equation}
\langle u^*, v_i \rangle \ge 0.
\end{equation}
This establishes that the individual update $v_i$ never opposes the collective intent $u^*$, ensuring the coherence of the multi-agent system during the optimization process.

\end{proof}

\begin{remark}
    The above process demonstrates that GRASP successfully decouples individual rationality from collective stability. Agents are free to follow their own gradients, but these gradients are regularized by a consensus term. This ensures alignment with team interests and prevents the disruptive oscillations commonly observed among independent learners.
\end{remark}

\newpage
\section{Parameters for the Comparative Experiment}
\label{ap:settingup}
In this section, we provide detailed hyperparameter settings for each algorithm and experimental scenario. 
~\cref{tab:hyper-comparison} presents a comparison of the general hyperparameters among the proposed GRASP, $\text{MA}^2\text{E}$, and baseline algorithms (MAPPO and HAPPO). It is observed that most baselines maintain consistency in the learning rate ($5 \times 10^{-4}$) and the choice of optimizer to ensure a fair comparison. Notably, $\text{MA}^2\text{E}$ employs a larger buffer size (5000) while integrating a QMIX mixing network, reflecting its unique mechanism for handling long-sequence data.

~\cref{tab:ma2e} further details the specific internal parameters of the $\text{MA}^2\text{E}$ model. Based on the Transformer architecture, the model is configured with 3 encoder layers and 2 decoder layers, supported by a 4-head attention mechanism. Furthermore, the settings for fine-tuning steps (500) and the pre-training threshold (0.015) are designed to balance representation learning on large-scale data with rapid transferability in specific scenarios.

\begin{table}[ht]
\caption{Hyperparameter settings for GRASP, MAPPO, HAPPO, and $\text{MA}^2\text{E}$.}
\label{tab:hyper-comparison}
\centering
\begin{small}
\begin{tabular}{lcccccc}
\toprule
Hyperparameter & GRASP & MAPPO & HAPPO & $\text{MA}^2\text{E}$  &GRASP+QMIX &QMIX\\
\midrule
Learning Rate & $5 \times 10^{-4}$ & $5 \times 10^{-4}$ & $5 \times 10^{-4}$ & $1 \times 10^{-3}$ & $1 \times 10^{-3}$ & $1 \times 10^{-3}$\\
Optimizer & Adam & Adam & Adam & Adam & Adam& Adam\\
PPO Epochs & 5 -- 10 & 5 -- 10 & 5 & N/A & N/A & N/A \\
Mini-batch Size & 1 -- 2 & 1 -- 2 & 1 & 32& 32& 32 \\
Gamma ($\gamma$) & 0.99 & 0.99 & 0.99 & 0.99 & 0.99 & 0.99 \\
Buffer size & 64 & 64 &64 &5000&5000&5000\\
Policy Update (episode) & 200 & 200 & 200 &1000&200 &200\\ 
ValueNorm & True & True & True & N/A& N/A& N/A \\
Mixing Network & N/A & N/A & N/A & QMIX& QMIX& QMIX \\
Consensus Operator  & QP & None & None & None & None& None\\
\bottomrule
\end{tabular}
\end{small}
\end{table}
We employ a Recurrent Neural Network (RNN) architecture for the policy of the agent and value approximation. Specifically, the network consists of three components: (1) a linear input layer that encodes the raw observation into a latent embedding; (2) a hidden layer utilizing a Gated Recurrent Unit (GRU) with a hidden state size of 64, followed by a ReLU activation function to capture temporal dependencies; and (3) a linear output layer that projects the hidden features to the action space. The detailed network hyperparameters are listed in Table~\ref{tab:agent_arch}.
\begin{table}[h!]
    \centering
    \caption{Agent Network Architecture Hyperparameters.}
    \label{tab:agent_arch}
    \begin{tabular}{llcc}
        \toprule
        \textbf{Component} & \textbf{Layer Type} & \textbf{Hidden Units} & \textbf{Activation} \\
        \midrule
        Input Encoder & Full Connect  Linear  & 64 & - \\
        Recurrent Layer & GRU & 64 & ReLU \\
        Output Head & Full Connect Linear  & $A_{\text{dim}}$ & - \\
        \bottomrule
    \end{tabular}
    \vspace{0.2cm} \\
    \footnotesize{\textit{Note:} $A_{\text{dim}}$ denotes the dimension of the action space. The input dimension corresponds to the individual partial observation dimension $O_{\text{dim}}$.}
\end{table}
~\cref{tab:critic_arch} and ~\cref{tab:mixer_arch} present the detailed hyperparameter configurations for the Centralized Critic and the Mixing Network, respectively. The Centralized Critic (Table~\ref{tab:critic_arch}) utilizes a Multi-Layer Perceptron (MLP) architecture to estimate the global state value, comprising two hidden layers with 64 units each and ReLU activation. The Mixing Network (Table~\ref{tab:mixer_arch}), following the QMIX paradigm, employs a hypernetwork-based structure. It maps the global state to the weights of the mixing layers, using a mixing embedding dimension of 32 to enforce the monotonicity constraint between individual agent utilities and the joint value.

Furthermore, we would like to clarify the network architecture design in GRASP. The policy networks of agents utilize a hybrid architecture: the feature representation backbone is shared across agents, while the specific policy heads remain independent. This design improves sample efficiency through collective experience while simplifying the computation of individual policy gradients $g_i$. However, GRASP is equally applicable to fully parameter-sharing methods. In a fully shared architecture, individual gradients can still be derived by leveraging the correlation of the input data. Specifically, by setting the input tensors to require gradients, we can compute the independent gradient of the total loss with respect to each agent's specific input slice ($g_i = \nabla_{\text{input}_i} \mathcal{L}$). While theoretically sound, this approach is computationally expensive and complex to implement. As shown in Listing~\ref{lst:grad_code}, it requires manually managing leaf tensors and retaining computation graphs during backpropagation for each agent.

\begin{lstlisting}[language=Python, caption={Pseudo-code for computing independent gradients in a fully shared architecture.}, label={lst:grad_code}]
inputs.requires_grad = True
g_i = torch.autograd.grad(
        outputs=loss, 
        inputs=inputs[agent_id], 
        retain_graph=True  
    )[0]
\end{lstlisting}

This repeated graph retention (\texttt{retain\_graph=True}) significantly inflates memory overhead and training time. Therefore, we adopted the hybrid architecture to perfectly balance computational efficiency and consensus performance.

\begin{table}[h!]
    \centering
    \caption{Centralized Critic Network Architecture.}
    \label{tab:critic_arch}
    \begin{tabular}{llcc}
        \toprule
        \textbf{Layer} & \textbf{Type} & \textbf{Hidden Units} & \textbf{Activation} \\
        \midrule
        Input & Full Connect Linear  & 64 & ReLU \\
        Hidden 1 & Full Connect Linear  & 64 & ReLU \\
        Hidden 2 & Full Connect Linear  & 64 & ReLU \\
        Output & Full Connect Linear  & 1 & - \\
        \bottomrule
    \end{tabular}
    \vspace{0.1cm} \\
    \footnotesize{\textit{Note:} The input dimension corresponds to the global state dimension $S_{\text{dim}}$.}
\end{table}

\begin{table}[h!]
    \centering
    \caption{Mixing Network and Hypernetwork Architecture.}
    \label{tab:mixer_arch}
    \begin{tabular}{llcc}
        \toprule
        \textbf{Component} & \textbf{Layer Type} & \textbf{Output Dim} & \textbf{Activation} \\
        \midrule
        \multicolumn{4}{l}{\textit{Mixing Network (Inputs: $N$ Agent Q-values)}} \\
        \cmidrule(lr){1-4}
        Mixing Layer 1 & Full Connect Linear & 32 & ELU \\
        Mixing Output & Full Connect Linear & 1 & - \\
        \midrule
        \multicolumn{4}{l}{\textit{Hypernetworks (Input: Global State $S$)}} \\
        \cmidrule(lr){1-4}
        Hyper-Input & Full Connect Linear & 64 & ReLU \\
        Hyper-Weight 1 & Full Connect Linear  & $N \times 32$ & Abs \\
        Hyper-Weight 2 & Full Connect Linear  & $32 \times 1$ & Abs \\
        Hyper-Bias & Full Connect Linear  & 32 / 1 & - \\
        \bottomrule
    \end{tabular}
    \vspace{0.1cm} \\
    \footnotesize{\textit{Note:} $N$ denotes the number of agents. Weights are generated dynamically by the Hypernetworks based on the state $S$.}
\end{table}

\begin{table}[h!]
\centering
\caption{The hyperparameter settings for $\text{MA}^2\text{E}$}
\label{tab:ma2e}
\begin{tabular}{ll}
\toprule
Hyperparameters & Value \\
\midrule
Batch size & 32 \\
Input embedding & 24 \\
The number of heads & 4 \\
The number of encoder layers & 3 \\
The number of decoder layers & 2 \\
Steps for fine-tuning & 500 \\
Pretraining threshold & 0.015 \\
\bottomrule
\end{tabular}
\end{table}

\begin{table*}[h!]
\centering
\caption{Characteristics of the Selected Google Research Football Scenarios. These scenarios were chosen to evaluate the GRASP framework across varying levels of difficulty, agent scale, and tactical requirements.}
\label{tab:grf_scenarios}
\resizebox{\textwidth}{!}{%
\begin{tabular}{@{}lcccc@{}}
\toprule
\textbf{Scenario Name} & \textbf{Scale} & \textbf{Description} & \textbf{Key Challenges} & \textbf{Testing Focus} \\ \midrule
\textbf{Academy Counterattack Hard} & \begin{tabular}[c]{@{}c@{}}Small\\ (4 Agents)\end{tabular} & \begin{tabular}[c]{@{}c@{}}Rapid transition from defense\\ to offense at high speed.\end{tabular} & \begin{tabular}[c]{@{}c@{}}High Dynamics,\\ Split-second Decision Making\end{tabular} & \begin{tabular}[c]{@{}c@{}}Responsiveness,\\ Dynamic Control\end{tabular} \\ \midrule
\textbf{Academy 3 Vs 1 With Keeper} & \begin{tabular}[c]{@{}c@{}}Small\\ (3 Agents)\end{tabular} & \begin{tabular}[c]{@{}c@{}}Exploiting numerical advantage\\ to dismantle defense.\end{tabular} & \begin{tabular}[c]{@{}c@{}}Local Coordination,\\ Spatial Reasoning\end{tabular} & \begin{tabular}[c]{@{}c@{}}Tactical Cooperation,\\ Precise Passing\end{tabular} \\ \midrule
\textbf{Academy Corner} & \begin{tabular}[c]{@{}c@{}}Large\\ (11 Agents)\end{tabular} & \begin{tabular}[c]{@{}c@{}}Full-team corner kick situation\\ with crowded penalty area.\end{tabular} & \begin{tabular}[c]{@{}c@{}}Massive State Space,\\ Occlusion \& Collisions\end{tabular} & \begin{tabular}[c]{@{}c@{}}Scalability,\\ Consensus in Noise\end{tabular} \\ \bottomrule
\end{tabular}%
}
\end{table*}

\begin{table}[h!]
\caption{Detailed configurations for SMAC and SMACv2 scenarios.}
\label{tab:smac-details}
\centering
\begin{small}
\begin{tabular}{lccccc}
\toprule
Scenario & Ally Units & Enemy Units & Difficulty & Step Limit \\
\midrule
\textit{SMAC} & & & & \\
5m\_vs\_6m & 5 Marines & 6 Marines & Hard & 70 \\
MMM2 & 10 Units\footnotemark[1] & 12 Units\footnotemark[2] & Super Hard & 180  \\
3s5z\_vs\_3s6z & 8 Units\footnotemark[3] & 9 Units\footnotemark[4] & Hard & 170 \\
27m\_vs\_30m & 27 Marines & 30 Marines & Super Hard & 180 \\
8m\_vs\_9m & 8 Marines & 9 Marines & Hard &  120 \\
\midrule
\textit{SMACv2} & & & & \\
Terran 10v10 & 10 Units & 10 Units & --- & 200  \\
Protoss 10v10 & 10 Units & 10 Units & --- & 200 \\
Zerg 5v5 & 5 Units & 5 Units & --- & 200 \\
\bottomrule
\end{tabular}
\end{small}
\footnotetext[1]{1 Medivac, 2 Marauders, 7 Marines}
\footnotetext[2]{1 Medivac, 3 Marauders, 8 Marines}
\footnotetext[3]{3 Stalkers, 5 Zealots}
\footnotetext[4]{3 Stalkers, 6 Zealots}
\end{table}

\begin{table}[h!]
\caption{Configurations for Google Football and MPE scenarios.}
\label{tab:other-env-details}
\centering
\begin{small}
\begin{tabular}{lccc}
\toprule
Scenario & Ally Agents & Step Limit & Reward Type \\
\midrule
\textit{Google Football} & & & \\
Counterattack Hard & 4 & 400 & Team collaboration provides \\
3 Vs 1 with Keeper & 3 & 400 & Team collaboration provides \\
Academy Corner & 10 & 400 & Team collaboration provides \\
\midrule
\textit{MPE} & & & \\
Simple Spread & 3 & 25 & Individual rewards are consolidated into team rewards \\
\bottomrule
\end{tabular}
\end{small}
\end{table}

~\cref{tab:grf_scenarios} enumerates the characteristics of the selected GRF scenarios. We have carefully chosen tasks of varying scales: ranging from Academy 3 Vs 1 With Keeper (3 agents), which focuses on local coordination and precise passing, to Academy Counterattack Hard (4 agents), which involves high-speed dynamic decision-making, and finally Academy Corner (11 agents), which tests the scalability of algorithms in a massive state space.

As shown in~\cref{tab:smac-details}, our evaluation covers various classic maps ranging from Hard to Super Hard difficulty. Each scenario is governed by a strict Step Limit (e.g., 70 steps for 5m\_vs\_6m and 180 steps for MMM2), requiring agents to complete the objective of defeating enemies within a constrained time budget.

~\cref{tab:other-env-details} summarizes the configurations for GRF and the MPE. In GRF, the step limit for all scenarios is uniformly set to 400 steps with a team-based reward mechanism. In contrast, the Simple Spread task in MPE has a limit of only 25 steps, focusing on the ability of agents to achieve collaborative coverage within an extremely short horizon.

\begin{table}[h!]
\centering
\caption{Hardware and Software Configurations for the Training Environment.}
\label{tab:hardware-config}
\begin{tabular}{ll}
\toprule
Component & Configuration Detail \\
\midrule
CPU Model & 13th Gen Intel(R) Core(TM) i5-13400 \\
CPU Architecture & x86\_64 (10 Cores, 16 Threads) \\
CPU Max Frequency & 4.60 GHz \\
GPU Model & NVIDIA GeForce RTX 4090 \\
GPU Memory & 24564 MiB (24 GB GDDR6X) \\
Driver Version & 535.183.01 \\
CUDA Version & 12.2 \\
\bottomrule
\end{tabular}
\end{table}

To evaluate the algorithmic performance in a high-performance computing environment, all experiments were conducted on a hardware platform equipped with professional-grade GPUs. ~\cref{tab:hardware-config} details the hardware and software configurations used in this study. We utilized a 13th Gen Intel Core i5-13400 processor as the central computing unit, whose architecture of 10 cores and 16 threads efficiently handles parallel sampling tasks in multi-agent environments. The experiments leveraged an NVIDIA GeForce RTX 4090 GPU, featuring 24 GB of VRAM and 16,384 CUDA cores, providing robust computational support for the rapid iteration of deep neural networks. Furthermore, the system operates under CUDA 12.2, ensuring high-efficiency parallelism for deep learning frameworks during the gradient updates of MAPPO or $\text{MA}^2\text{E}$.

\end{document}